\documentclass[hidelinks,onefignum,onetabnum]{siamonline250211}

\usepackage{amsmath,amssymb}
\usepackage{mathtools}
\usepackage{bm}
\usepackage{booktabs}
\usepackage{colortbl}
\usepackage{graphicx}
\usepackage{array}
\graphicspath{{figures/}}

\newcommand{\R}{\mathbb{R}}
\newcommand{\N}{\mathbb{N}}
\newcommand{\Z}{\mathbb{Z}}

\newcommand{\E}{\mathbb{E}}
\newcommand{\Prb}{\mathbb{P}}

\newcommand{\bx}{\boldsymbol{x}}

\newcommand{\bp}{\boldsymbol{p}}
\newcommand{\bX}{\boldsymbol{X}}
\newcommand{\bpsi}{\boldsymbol{\psi}}
\newcommand{\bPsi}{\boldsymbol{\Psi}}
\newcommand{\bzeta}{\boldsymbol{\zeta}}
\newcommand{\bbeta}{\boldsymbol{\beta}}
\newcommand{\bnu}{\boldsymbol{\nu}}

\newcommand{\bA}{\mathbf{A}}
\newcommand{\bG}{\mathbf{G}}
\newcommand{\bK}{\mathbf{K}}
\newcommand{\bDelta}{\boldsymbol{\Delta}}

\newcommand{\calX}{\mathcal{X}}
\newcommand{\calK}{\mathcal{K}}
\newcommand{\calA}{\mathcal{A}}
\newcommand{\calF}{\mathcal{F}}
\newcommand{\calS}{\mathcal{S}}

\newcommand{\spn}{\operatorname{span}}

\newsiamremark{remark}{Remark}
\newsiamthm{defn}{Definition}
\newsiamthm{claim}{Claim}

\headers{Koopman Operator Inference for the Chemical Master Equation}{A. Dendukuri and L. R. Petzold}
\title{Koopman Operator Inference for the Chemical Master Equation}

\author{Aditya Dendukuri%
\thanks{Department of Computer Science, University of California Santa Barbara
(\email{adendukuri@ucsb.edu}).}
\and Linda R. Petzold%
\thanks{Departments of Mechanical Engineering and Computer Science,
University of California Santa Barbara
(\email{petzold@engineering.ucsb.edu}).}}

\ifpdf
\hypersetup{
hidelinks,
pdftitle={Koopman Operator Inference for the Chemical Master Equation},
pdfauthor={Aditya Dendukuri and Linda R. Petzold}}
\fi

\begin{document}

\maketitle

\begin{abstract}
We infer kinetic rates of stochastic reaction networks from molecular-state
pairs separated by a fixed time lag, without constructing the large state
space of the Chemical Master Equation. Extended dynamic mode decomposition
estimates the finite-time evolution of selected functions of the molecular
state. A matrix logarithm then gives a projected generator whose species rows
determine the rates through the known reaction stoichiometry. A finite set of
functions introduces a systematic bias. We construct additional functions
from the reaction channels and prove that each level of closure moves the
first bias term to higher order. Spectral formulas identify the omitted modes
responsible for this bias and show when the logarithm amplifies sampling
error. Exact calculations and experiments on five reaction networks confirm
the predicted error behavior and show when the added functions improve
finite-sample rate recovery.
\end{abstract}

\begin{keywords}
Koopman operator, extended dynamic mode decomposition, chemical master equation, stochastic reaction network, matrix logarithm, reaction rate inference
\end{keywords}

\begin{MSCcodes}
65F60, 37M25, 92C42, 60J27
\end{MSCcodes}

% =============================================================================
\section{Introduction}
\label{sec:intro}
% =============================================================================

The Chemical Master Equation (CME) governs the probability distribution of
molecular copy numbers in a well-mixed stochastic reaction
network~\cite{gillespie1977exact,mcquarrie1967stochastic}. On a finite state
space, the probability vector $p(t)$ satisfies
\[
 \frac{d}{dt}p(t)=\bA(c)p(t),
\]
where the reaction stoichiometry, propensities, and rate vector $c$ determine
the generator $\bA(c)$~\cite{gillespie1992rigorous}. Our inverse problem is to
recover $c$ from pairs of molecular states observed at a fixed time
separation.

The molecular state space grows rapidly with the number of species and their
copy numbers. A likelihood based on the full CME requires a transition model
whose dimension is the number of retained states. Finite-state projection
reduces this cost by truncation~\cite{munsky2006finite}. Moment methods replace
the distribution by a finite set of moments, but generally require a closure
approximation~\cite{engblom2006computing,schnoerr2017approximation}. We instead
pose the inverse problem on selected functions of the molecular state. The
number of these observables, which we also call the dictionary, need not grow
with the number of molecular states.

Let $\bX(t)$ denote the vector of molecular copy numbers at time $t$, and let $\bx$ denote a possible initial state. An \emph{observable} is a function $f$ of these copy numbers. Its conditional expected value after time $t$ is
\[
 (\calK_t f)(\bx)=\E[f(\bX(t))\mid\bX(0)=\bx].
\]
The operators $\{\calK_t\}_{t\geq0}$ form the Koopman semigroup. The instantaneous rate of change of an observable is given by
\[
 (\calA f)(\bx)
 =\lim_{t\downarrow0}\frac{(\calK_t f)(\bx)-f(\bx)}{t}.
\]
This operator $\calA$ is the Koopman generator. For the CME,
$\calA=\bA(c)^\top$. From observations separated by a time $h$, extended
dynamic mode decomposition (EDMD) approximates $\calK_h$ on $d$
observables~\cite{williams2015data}. Its principal matrix logarithm
approximates $\calA$, and the rows associated with the species determine the
reaction rates. The fitted matrix has dimension $d$, independent of the
number of molecular states.

The estimator has three stages:
\begin{equation}\label{eq:intro_estimator}
\begin{aligned}
 \widehat{\bK}_h&=\bPsi_Y\bPsi_X^\dagger
 &&\text{(finite-time Koopman matrix)},\\
 \widehat{\bA}_h&=h^{-1}\log\widehat{\bK}_h
 &&\text{(Koopman generator matrix)},\\
 \widehat c&=\mathcal R(\widehat{\bA}_h)
 &&\text{(reaction rates)}.
\end{aligned}
\end{equation}
Here $\bPsi_X$ and $\bPsi_Y$ contain the selected observables evaluated at the two states in each pair, and $\mathcal R$ extracts rates from the species rows. The fitted square matrices have dimension $d$, the number of selected observables.

Prior work uses Koopman lifting and the logarithm of an EDMD matrix to recover governing equations from snapshot pairs ~\cite{mauroy2020koopman,drmac2021koopmanlog}. Other work uses generator EDMD to approximate the infinitesimal operator directly and identify drift and diffusion coefficients~\cite{klus2020data}. Stochastic-model parameters can also be fitted to an estimated operator~\cite{riseth2018operator}, while estimates at several observation intervals can be combined to recover a Markov-process generator ~\cite{sechi2021newton}.

For Markov jump processes, Niemann, Klus, and Sch\"utte recover transition-rate combinations from a Koopman-generator model of the large-population diffusion limit~\cite{niemann2021agent}. Their construction uses short-time Kramers--Moyal estimates of drift and diffusion. Classical generator-inference methods fit a state-to-state rate matrix from a discretely observed transition matrix~\cite{crommelin2006fitting}. We restrict the exact jump-process generator to selected mass-action observables, so the matrix dimension equals the dictionary dimension rather than the number of molecular states.

Gupta and Khammash approximate the continuous-time Koopman operator of a specified reaction network through its dominant decay modes ~\cite{gupta2025srnkoopman}. They use this representation to predict moments across initial states and compute parameter sensitivities and spectral densities. In our inverse problem, we determine unknown reaction rates from state pairs at one fixed separation through \eqref{eq:intro_estimator}. The finite-dimensional projection therefore occurs before logarithmic inversion and can bias the rates even when the projected finite-time operator is known exactly.

Mass-action kinetics provides a natural observable set. If
$\phi_{\bnu_k}$ is the propensity monomial for reaction $k$, then
\begin{equation}\label{eq:intro_species_action}
 \calA x_s=\sum_{k=1}^M \zeta_{s,k}c_k\phi_{\bnu_k}.
\end{equation}
The coefficient of each propensity is therefore a stoichiometric change multiplied by a reaction rate. Once the observable set contains these monomials, identifiable rates can be read from the species rows of the exact Koopman generator.

The same structure gives an observable hierarchy. The generator is affine in
the rates,
\begin{equation}\label{eq:intro_channel_expansion}
 \calA(c)=\sum_{k=1}^M c_k\calA_k,
 \qquad
 (\calA_k f)(\bx)
 =\phi_{\bnu_k}(\bx)
   \bigl[f(\bx+\bzeta_k)-f(\bx)\bigr],
\end{equation}
and every unit-rate channel operator $\calA_k$ is known once the reaction set is specified. Repeated generator images therefore have the expansion
\[
 \calA(c)^j x_s
 =\sum_{k_1,\ldots,k_j=1}^M
 c_{k_j}\cdots c_{k_1}
 \calA_{k_j}\cdots\calA_{k_1}x_s.
\]
At any fixed depth, all functions on the right can be generated without the
unknown rates. They form a reaction-defined closure hierarchy for the species
rows that contain the physical parameters. The falling-factorial basis makes
the growth of this hierarchy explicit. If $\mathcal P_q$ denotes polynomials
of degree at most $q$ and reaction $k$ has order $o_k=|\bnu_k|$, then
\begin{equation}\label{eq:intro_degree_growth}
 \calA_k\mathcal P_q\subseteq\mathcal P_{q+o_k-1}.
\end{equation}
The jump difference lowers degree before multiplication by the propensity.
Polynomial spaces are therefore invariant for networks with only zero- and
first-order reactions, and the projected logarithm is exact for sufficiently
small admissible $h$. For nonlinear reactions,
\eqref{eq:intro_degree_growth} bounds the new degrees introduced at each
closure depth. This gives exact closure for linear reaction networks and a
controlled, rate-independent enlargement for nonlinear networks.

Finite-time data determine a projected operator. If $P$ projects onto the
selected observable space $\calF$, then exact finite-time expectations give
the generator
\begin{equation}\label{eq:intro_projection_defect}
 h^{-1}\log\!\left(Pe^{h\calA}|_{\calF}\right),
 \qquad
 h^{-1}\log\!\left(Pe^{h\calA}|_{\calF}\right)
 \ne P\calA|_{\calF}
\end{equation}
in general. We therefore obtain biased species rows even before sampling error enters.

We reduce this error by adding successive generator images of the species
counts used for rate extraction. An observable space has \emph{closure depth
$r$} when it contains $\calA x_s,\ldots,\calA^r x_s$ for every selected
species $x_s$. This definition links the reaction-network identity directly
to the finite-time error in the physical rates.

The analysis gives a species-row identifiability condition, a modal-residue
description of the omitted Koopman modes, and a rate-bias theorem at every
closure depth. We also propagate sampling error through the logarithm. Exact
finite-state matrices recover the predicted bias orders, and independent
trajectory data measure the effect of the added observables on four further
networks. Throughout, the reaction set is known, observations occur in
pairs, and the selected species coefficients identify the rates. The unknowns
are the rates within this fixed reaction set.

\section{Reaction Networks and the Chemical Master Equation}
\label{sec:networks}
% =============================================================================

We first describe the reaction network and its two generators. The forward
generator evolves probabilities, while the backward generator evolves the
observables used to recover the rates.

\subsection{Reaction network model}

Consider a network of $N$ molecular species and $M$ reaction channels.
\begin{defn}[Stochastic reaction network]
\label{def:srn}
A stochastic reaction network on $N$ species $\calS_1,\dots,\calS_N$ consists of $M$ reaction channels indexed by $k=1,\dots,M$, each specified by:
\begin{itemize}
\item a stoichiometric vector $\bzeta_k=\bbeta_k-\bnu_k\in\Z^N$, where $\nu_{ik},\beta_{ik}\in\N$ are the reactant and product stoichiometric coefficients for species $\calS_i$; and
\item a propensity function $\alpha_k:\mathbb N_0^N\to[0,\infty)$, where, conditional on state $\bx$ at time $t$, the probability that reaction $k$ fires in $[t,t+dt)$ is $\alpha_k(\bx)\,dt+o(dt)$.
\end{itemize}
The state $\bx=(x_1,\dots,x_N)^\top\in\mathbb N_0^N$ records the molecular copy numbers, and $c=(c_1,\dots,c_M)^\top$ denotes the vector of reaction rates.
\end{defn}

Under mass-action kinetics, reaction $k$ has propensity
\begin{equation}\label{eq:mass_action}
\alpha_k(\bx)=c_k\phi_{\bnu_k}(\bx),\qquad
\phi_{\bnu_k}(\bx)\coloneqq\prod_{i=1}^N\binom{x_i}{\nu_{ik}},
\end{equation}
where $c_k>0$ is the rate constant and the binomial factor counts the possible
reactant groups. We call $\phi_{\bnu_k}$ the \emph{propensity monomial} of
reaction $k$. The reactant stoichiometry fixes this monomial, while $c_k$ is
unknown. A source reaction has monomial $1$, and a unimolecular reaction in
species $i$ has monomial $x_i$. A bimolecular reaction has either $x_ix_j$ or
$x_i(x_i-1)/2$; trimolecular reactions give the analogous degree-three
factors.

\subsection{Chemical master equation}
\label{sec:cme}

The reaction channels define a continuous-time Markov chain and its CME
generator.
\begin{defn}[CME generator]
\label{def:cme}
The probability mass function $p(\bx,t)=\Prb[\bX(t)=\bx]$ satisfies the \emph{Chemical Master Equation}~\cite{ethier1986markov,gillespie1992rigorous}
\begin{equation}\label{eq:cme}
\frac{d}{dt}p(\bx,t)
=\sum_{k=1}^M\bigl[\alpha_k(\bx-\bzeta_k)p(\bx-\bzeta_k,t)
-\alpha_k(\bx)p(\bx,t)\bigr].
\end{equation}
On a finite closed state set $\calX=\{\bx_1,\dots,\bx_n\}$, let $\bp(t)=(p(\bx_1,t),\ldots,p(\bx_n,t))^\top$. Then $\dot{\bp}(t)=\bA(c)\bp(t)$, where $\bA(c)\in\R^{n\times n}$ has entries
\[
\bA(c)_{\bx_i,\bx_j}
 =\sum_{k:\,\bx_i=\bx_j+\bzeta_k}\alpha_k(\bx_j)\quad(i\ne j),
 \qquad
\bA(c)_{\bx_j,\bx_j}=-\sum_{k:\,\bzeta_k\ne0}\alpha_k(\bx_j),
\]
and all columns sum to zero. The state set is closed if every enabled reaction remains in $\calX$. An FSP truncation adds a sink state for transitions that leave its retained set.
\end{defn}

An inverse problem posed directly in terms of $\bA(c)$ uses probability vectors of dimension $n=|\calX|$, which can grow combinatorially with the copy numbers. The Koopman formulation follows selected conditional expectations.

\section{From Koopman Observables to Reaction Rates}
\label{sec:koopman}

The Koopman generator acts on functions of the molecular state rather than on
state probabilities. For mass-action networks, its action on a species count
is a known linear combination of the reaction rates. This identity places the
unknown rates in selected rows of a small generator matrix.

\subsection{Koopman semigroup and generator}

\begin{defn}[Koopman semigroup]
\label{def:koopman}
An \emph{observable} is a measurable function $f:\calX\to\R$. For bounded observables $f$, let the \emph{Koopman semigroup} $\{\calK_t\}_{t\geq 0}$ act by
\begin{equation}\label{eq:koopman_semigroup}
(\calK_t f)(\bx)=\E\bigl[f(\bX(t))\mid\bX(0)=\bx\bigr].
\end{equation}
\end{defn}

The semigroup satisfies $\calK_0=I$ and $\calK_{t+s}=\calK_t\calK_s$ for all $t,s\geq 0$. It is the evolution of observable functions dual to the probability evolution generated by $\bA(c)$.

\begin{defn}[Koopman generator]
\label{def:koopman_gen}
Let the \emph{Koopman generator} denote the instantaneous rate of change of an observable under the Koopman semigroup:
\begin{equation}\label{eq:koopman_generator_limit}
(\calA f)(\bx)
\coloneqq\lim_{t\downarrow0}
\frac{(\calK_t f)(\bx)-f(\bx)}{t},
\end{equation}
whenever this limit exists. For a stochastic reaction network, $\calA=\bA(c)^\top$, and we evaluate the limit to obtain
\begin{equation}\label{eq:koopman_generator}
(\calA f)(\bx)=\sum_{k=1}^M\alpha_k(\bx)\bigl[f(\bx+\bzeta_k)-f(\bx)\bigr]
\end{equation}
for any bounded observable $f$. It follows that
$\frac{d}{dt}\E[f(\bX(t))]=\E[(\calA f)(\bX(t))]$, which describes the
evolution of expected observables. For unbounded polynomial observables,
these identities require the corresponding moment and generator-domain
conditions.
\end{defn}

\begin{remark}[Observable-space restriction]\label{rem:observable_space}
On a finite state space, the forward CME generator and backward Koopman generator are dual:
\begin{equation}\label{eq:forward_backward_duality}
\begin{aligned}
 \langle f,\bA(c)p\rangle
   &=\langle \bA(c)^\top f,p\rangle,\\
 \calA&=\bA(c)^\top,\\
 \E_{p_0}[f(\bX(t))]
   &=f^\top e^{t\bA(c)}p_0
    =p_0^\top e^{t\calA}f.
\end{aligned}
\end{equation}
On a countable state space, the same dual relation holds when the generators and pairing have compatible domains, and both spaces may be infinite dimensional. The finite model used here is the restriction of $\calK_t$ to the observables needed for rate recovery. Its dimension is the number of retained functions; its approximation error is analyzed in Sections~\ref{sec:method}--\ref{sec:block_analysis}.
\end{remark}

\subsection{Mass-action basis}
\label{sec:mab}

\begin{defn}[Mass-action basis]\label{def:mab}
For a multi-index $\boldsymbol\gamma=(\gamma_1,\dots,\gamma_N)\in\mathbb{N}_0^N$, the corresponding \emph{mass-action monomial} is
\begin{equation}\label{eq:mab_monomial}
\phi_{\boldsymbol\gamma}(\bx)
=\prod_{i=1}^N\binom{x_i}{\gamma_i}.
\end{equation}
For an order $q$ at least as large as the largest reaction order, the \emph{mass-action dictionary}
\begin{equation}\label{eq:mab_propensity}
 \bpsi(\bx)
 =\bigl(\psi_1(\bx),\dots,\psi_d(\bx)\bigr)^\top
\end{equation}
contains the monomials $\phi_{\boldsymbol\gamma}$ with $|\boldsymbol\gamma|\leq q$, ordered so that the constant and species coordinates appear first. Before dependencies are removed, it contains $d=\binom{N+q}{q}$ functions. We retain an independent subset on the state space under study while preserving the constant, the species, and a representation of every propensity monomial.
\end{defn}

Before conservation reduction, each reaction propensity has the coordinate representation
\begin{equation}\label{eq:propensity_coordinate}
 \alpha_k(\bx)=c_k\psi_{\ell(k)}(\bx),\qquad k=1,\dots,M,
\end{equation}
where $\ell(k)$ is the dictionary index for $\phi_{\bnu_k}=\psi_{\ell(k)}$. After conservation reduction, the same monomial may instead be represented by a linear combination of dictionary functions. The corresponding extraction formula follows the basic single-coordinate case.

\subsection{Generator applied to species counts}
\label{sec:closure}

Let $\calF=\spn\{\psi_1,\dots,\psi_d\}$ be the subspace spanned by the mass-action basis in Definition~\ref{def:mab}. For each reaction channel, let the unit-rate reaction operator be
\begin{equation}\label{eq:channel_generator}
 (\calA_k f)(\bx)
 =\phi_{\bnu_k}(\bx)\bigl[f(\bx+\bzeta_k)-f(\bx)\bigr].
\end{equation}
The full Koopman generator is therefore
$\calA=\sum_{k=1}^M c_k\calA_k$. The specified reactions determine every
$\calA_k$ without knowledge of the rates.

The classical mass-action first-moment identity
\cite{gillespie1992rigorous} now gives the coefficient rows used for rate
identification and for the analysis of the projected matrix logarithm.
\begin{lemma}[Generator applied to a species count]\label{clm:species_action}
Let $\bpsi$ be the mass-action basis of Definition~\ref{def:mab}. Then, for each species $s$,
\begin{equation}\label{eq:species_closure}
(\calA x_s)(\bx)=\sum_{k=1}^M
 \zeta_{s,k}c_k\phi_{\bnu_k}(\bx).
\end{equation}
Here $\zeta_{s,k}$ is the $s$th component of $\bzeta_k$. In particular, $\calA x_s\in\calF$, and the expected species evolution satisfies
\begin{equation}\label{eq:mean_evolution}
\frac{d}{dt}\E[x_s(\bX(t))]
=\sum_{k=1}^M\zeta_{s,k}c_k
 \E[\phi_{\bnu_k}(\bX(t))]
\end{equation}
exactly at the species level.
\end{lemma}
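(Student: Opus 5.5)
The identity \eqref{eq:species_closure} follows by applying the generator formula \eqref{eq:koopman_generator} to the coordinate observable $f(\bx)=x_s$ and using the mass-action form \eqref{eq:mass_action} of the propensities. The jump difference is
\[
 f(\bx+\bzeta_k)-f(\bx)=(x_s+\zeta_{s,k})-x_s=\zeta_{s,k},
\]
a constant independent of $\bx$. Substituting into \eqref{eq:koopman_generator} gives $(\calA x_s)(\bx)=\sum_k\alpha_k(\bx)\zeta_{s,k}=\sum_k\zeta_{s,k}c_k\phi_{\bnu_k}(\bx)$. Equivalently, in the channel form \eqref{eq:channel_generator}, one has $\calA_k x_s=\zeta_{s,k}\phi_{\bnu_k}$, and summing over $k$ with weights $c_k$ gives the same expression.

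Membership $\calA x_s\in\calF$ then follows from Definition~\ref{def:mab}. The order $q$ is at least the largest reaction order, so each $\phi_{\bnu_k}$ with $|\bnu_k|\le q$ is a dictionary monomial. When dependencies are removed, the definition explicitly retains a representation of every propensity monomial, as in \eqref{eq:propensity_coordinate} or its conservation-reduced linear-combination variant. Hence the finite linear combination $\sum_k\zeta_{s,k}c_k\phi_{\bnu_k}$ lies in $\calF$.

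For \eqref{eq:mean_evolution}, apply Dynkin's identity $\frac{d}{dt}\E[f(\bX(t))]=\E[(\calA f)(\bX(t))]$ from Definition~\ref{def:koopman_gen} with $f=x_s$, and then use linearity of expectation on the finite sum. On a finite closed state set $\calX$ this is immediate. There $x_s$ is bounded, $\calA=\bA(c)^\top$, and the duality \eqref{eq:forward_backward_duality} gives $\frac{d}{dt}x_s^\top e^{t\bA}p_0=x_s^\top\bA e^{t\bA}p_0=(\bA^\top x_s)^\top p(t)$.

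The only genuine obstacle is the countable-state case. There $x_s$ is unbounded and Definition~\ref{def:koopman_gen} only asserts the identity under moment and domain conditions. I would state the result under those standing assumptions. Concretely, suppose $\E[\phi_{\bnu_k}(\bX(u))]$ is finite and locally integrable in $u$. Then one can apply Dynkin's formula with the stopping times $\tau_m=\inf\{t:|\bX(t)|\ge m\}$ to obtain $\E[x_s(\bX(t\wedge\tau_m))]-x_s(\bX(0))=\E\int_0^{t\wedge\tau_m}(\calA x_s)(\bX(u))\,du$. Letting $m\to\infty$ by dominated convergence yields the integrated form, and differentiating gives \eqref{eq:mean_evolution}. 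The algebraic identity \eqref{eq:species_closure} itself holds pointwise without any such conditions.
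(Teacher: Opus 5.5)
Your argument is the paper's own: the increment of $x_s$ under a firing of channel $k$ is the constant $\zeta_{s,k}$. Weighting by $\alpha_k=c_k\phi_{\bnu_k}$ and summing gives \eqref{eq:species_closure}, membership in $\calF$ follows because the dictionary represents every propensity monomial, and \eqref{eq:mean_evolution} follows from $\frac{d}{dt}\E[f(\bX(t))]=\E[(\calA f)(\bX(t))]$, which you check directly on a finite closed state set. Your localization argument for unbounded state spaces goes beyond the paper, which defers that case to its moment and domain conditions. If you keep that argument, name the dominating variable for the left side, for example $x_s(\bX(0))+\sum_k|\zeta_{s,k}|R_k(t)$ with $R_k(t)$ the number of firings of channel $k$ on $[0,t]$; your propensity-moment hypothesis controls its expectation.
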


\begin{proof}
A firing of channel $k$ changes $x_s$ by $\zeta_{s,k}$. We multiply this increment by its propensity and sum over channels to obtain both identities.
\end{proof}

\paragraph{Matrix convention}
Let $P$ denote projection onto $\calF$, and let $\bA$ be the $d\times d$ coordinate matrix of $P\calA|_{\calF}$ in the ordered physical dictionary $\bpsi$, with coefficients stored by row. Thus $(\bA)_{i,j}$ is the coefficient of $\psi_j$ in $P\calA\psi_i$. The species coordinates are retained explicitly and, by Lemma~\ref{clm:species_action}, $P\calA x_s=\calA x_s$. Hence, the corresponding rows of $\bA$ contain exact linear combinations of the rates. Later applications of the generator can leave $\calF$, and the resulting projection error is the subject of Section~\ref{sec:block_analysis}.

\paragraph{Species-row structure}
With rows labeled by the observable acted upon and columns by the resulting dictionary coordinate, the matrix has the form
\begin{equation}\label{eq:koopman_matrix_anatomy}
\bA=
\bordermatrix{
 & \psi_1 & \cdots & \psi_j & \cdots & \psi_d \cr
 1   & 0 & \cdots & 0 & \cdots & 0 \cr
 x_1 & (\bA)_{1,1} & \cdots & (\bA)_{1,j} & \cdots & (\bA)_{1,d} \cr
 \vdots & \vdots & \ddots & \vdots & & \vdots \cr
 x_s & (\bA)_{s,1} & \cdots & (\bA)_{s,j} & \cdots & (\bA)_{s,d} \cr
 \vdots & \vdots & & \vdots & \ddots & \vdots \cr
 x_N & (\bA)_{N,1} & \cdots & (\bA)_{N,j} & \cdots & (\bA)_{N,d} \cr
 \text{higher observables}
     & \ast & \cdots & \ast & \cdots & \ast
 }.
\end{equation}
The constant row is zero because $\calA1=0$. For a species row, the generator action can be collected by dictionary column:
\begin{align}
 P\calA x_s=\calA x_s
 &=\sum_{k=1}^M \zeta_{s,k}c_k\psi_{\ell(k)} \notag\\
 &=\sum_{j=1}^d
 \left(\sum_{k:\,\ell(k)=j}\zeta_{s,k}c_k\right)\psi_j.
 \label{eq:species_row_columns}
\end{align}
Consequently,
\begin{equation}\label{eq:koopman_species_entries}
 (\bA)_{s,j}
 =\sum_{k:\,\ell(k)=j}\zeta_{s,k}c_k,
 \qquad s=1,\ldots,N,\quad j=1,\ldots,d.
\end{equation}
Equivalently, each propensity column has the species block
\begin{equation}\label{eq:koopman_species_column}
 \begin{bmatrix}
  (\bA)_{1,j}\\
  \vdots\\
  (\bA)_{N,j}
 \end{bmatrix}
 =\sum_{k:\,\ell(k)=j}c_k\bzeta_k.
\end{equation}
Reaction $k$ therefore contributes the scaled stoichiometric vector
$c_k\bzeta_k$ to its propensity column $\ell(k)$. The starred rows describe
the action on higher observables and need not close in the retained
dictionary.

When several reactions share a propensity function, one coefficient no
longer determines one rate. The species rows still separate the rates when
their stoichiometric changes are linearly independent. The following
corollary states this condition and defines the extraction map.
\begin{corollary}[Rate extraction]\label{cor:rates}
Suppose reactions $k_1,\ldots,k_g$ share the propensity observable $\psi_j$. Their rates and the corresponding species-column coefficients satisfy
\begin{equation}\label{eq:generator_structure}
 \begin{bmatrix}
  (\bA)_{1,j}\\ \vdots\\ (\bA)_{N,j}
 \end{bmatrix}
 =
 \begin{bmatrix}
  \zeta_{1,k_1}&\cdots&\zeta_{1,k_g}\\
  \vdots&&\vdots\\
  \zeta_{N,k_1}&\cdots&\zeta_{N,k_g}
 \end{bmatrix}
 \begin{bmatrix}c_{k_1}\\ \vdots\\ c_{k_g}\end{bmatrix}.
\end{equation}
The rates are identifiable from this column if and only if the displayed stoichiometric matrix has full column rank. In that case, we recover the rates by multiplying the species-column vector by the pseudoinverse.
\end{corollary}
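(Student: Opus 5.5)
The corollary follows from Lemma~\ref{clm:species_action} together with the species-column identity \eqref{eq:koopman_species_column}; what remains is linear algebra. First I will specialize \eqref{eq:koopman_species_column} to the column $j$ under consideration. The index set $\{k:\ell(k)=j\}$ is exactly $\{k_1,\ldots,k_g\}$, so the species block of column $j$ equals $\sum_{m=1}^g c_{k_m}\bzeta_{k_m}$. Writing this sum as a matrix--vector product gives \eqref{eq:generator_structure} directly. I will denote the stoichiometric matrix by $Z_j=[\bzeta_{k_1}\,\cdots\,\bzeta_{k_g}]\in\Z^{N\times g}$ and the rate vector by $c_{(j)}$.

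For identifiability I will read the statement as follows. The rates $c_{(j)}$ are identifiable from column $j$ if the map $c_{(j)}\mapsto Z_jc_{(j)}$ is injective on the admissible set of rates. By \eqref{eq:koopman_species_entries}, no other rate enters the species entries of column $j$, so this column carries no information about rates beyond $Z_jc_{(j)}$. For sufficiency, if $Z_j$ has full column rank then $\ker Z_j=\{0\}$, and so $Z_jc=Z_jc'$ forces $c=c'$. For necessity, suppose $Z_j$ is rank deficient, and pick a nonzero $v\in\ker Z_j$. Every admissible rate vector is strictly positive, and positivity is an open condition, so $c_{(j)}+\varepsilon v$ is still positive for small $\varepsilon$. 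This perturbed vector produces the same column as $c_{(j)}$, so identifiability fails. Handling this positivity constraint is the only delicate point; the open-set argument resolves it.

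For the recovery formula, full column rank makes $Z_j^\top Z_j$ invertible, so $Z_j^\dagger=(Z_j^\top Z_j)^{-1}Z_j^\top$ and $Z_j^\dagger Z_j=I_g$. Applying $Z_j^\dagger$ to \eqref{eq:generator_structure} then returns $c_{(j)}$ exactly. This defines the extraction map $\mathcal R$ column by column over the propensity columns. After conservation reduction the same argument applies, with the species block replaced by the corresponding linear combination of columns, as the remark after \eqref{eq:propensity_coordinate} indicates.
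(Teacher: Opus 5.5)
Your proposal is correct and follows the paper's route. The paper collects the coefficients of the shared propensity $\psi_j$ from \eqref{eq:species_closure}, while you use its column form \eqref{eq:koopman_species_column}; both then conclude by the full-column-rank criterion and the left-inverse property of the pseudoinverse. Your perturbation $c_{(j)}+\varepsilon v$ with $v\in\ker Z_j$, which keeps the rates positive, is a worthwhile detail for the necessity direction that the paper's one-line proof leaves implicit.
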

\begin{proof}
Collect the coefficients of the shared propensity in \eqref{eq:species_closure}. The resulting linear system has a unique rate vector exactly when its stoichiometric matrix has full column rank.
\end{proof}

Let $\mathcal R$ denote the fixed linear map obtained by collecting all full-rank instances of~\eqref{eq:generator_structure} from the species rows to the rate vector: $\mathcal R(\bA)=c$. This notation also covers conservation-reduced dictionaries, where a propensity is represented by a known linear combination of columns. The rank condition is verified for every numerical experiment; we may use additional observable rows to identify rates that the species rows alone cannot separate.

Species-row closure is weaker than invariance of $\calF$ under $\calA$. If channel $k$ has reaction order $o_k$, the jump difference lowers the observable degree by at least one, so
\[
 \calA_k\mathcal P_q\subseteq\mathcal P_{q+o_k-1},\qquad \calA_k1=0,
\]
where $\mathcal P_q$ contains polynomials of degree at most $q$. Thus, repeated application of the reaction operators for bimolecular and trimolecular channels can produce functions outside a fixed dictionary. The local expansions are therefore proved on a finite closed state space. Their extension to an unbounded network would require additional domain and moment bounds; the unbounded examples in Section~\ref{sec:results} lie outside the stated scope of the theorem.

\section{Estimating the Finite-Time Koopman Operator}
\label{sec:method}
\label{sec:error}
% =============================================================================

Each observation is a pair $(\bX(t_i),\bX(t_i+h))$ separated by the same
interval $h$. EDMD fits the finite-time Koopman operator from these pairs. We
take its principal matrix logarithm, extract the rates, and then propagate the
sampling error through these calculations.

For $m$ paired observations, arrange the evaluated observables as columns:
\begin{equation}\label{eq:feature_matrices}
\begin{aligned}
 \bPsi_X
 &=\begin{bmatrix}
 \bpsi(\bX(t_1))&\cdots&\bpsi(\bX(t_m))
 \end{bmatrix},\\
 \bPsi_Y
 &=\begin{bmatrix}
 \bpsi(\bX(t_1+h))&\cdots&\bpsi(\bX(t_m+h))
 \end{bmatrix}.
\end{aligned}
\end{equation}
When $\bPsi_X$ has full row rank, the empirical moment matrices and their induced operator are
\begin{equation}\label{eq:compact_estimator}
\begin{aligned}
 \widehat{\bG}_0&=\frac1m\bPsi_X\bPsi_X^\top,\\
 \widehat{\bG}_1&=\frac1m\bPsi_Y\bPsi_X^\top,\\
 \widehat{\bK}_h
   &=\widehat{\bG}_1\widehat{\bG}_0^{-1}
     =\bPsi_Y\bPsi_X^\dagger,\\
 \widehat{\bA}_h&=h^{-1}\log\widehat{\bK}_h,\\
 \widehat c&=\mathcal R(\widehat{\bA}_h).
\end{aligned}
\end{equation}
The dictionary contains the constant, selected species, and every mass-action propensity required for extraction. Fixed feature scaling is undone before the linear map $\mathcal R$ is applied. Rate estimates require full dictionary rank and a real principal logarithm.

\subsection{Infinite-sample limit}

Let $\mu$ be the law of the current state, possibly a mixture of transient observation laws. The population moment matrices have entries
\begin{equation}\label{eq:gram_pop_compact}
\begin{aligned}
 (\bG_0)_{ij}
 &=\E_\mu[\psi_i(\bX(t))\psi_j(\bX(t))],\\
 (\bG_1)_{ij}
 &=\E_\mu[\psi_i(\bX(t+h))\psi_j(\bX(t))].
\end{aligned}
\end{equation}
The infinite-sample, or population, EDMD matrix is $\bK_h=\bG_1\bG_0^{-1}$. It represents the $L^2(\mu)$ projection of $\calK_h$ onto the dictionary span. For independent trajectories, or for one stationary ergodic trajectory, we apply entrywise laws of large numbers to obtain $\widehat{\bG}_i\to\bG_i$. If $\bG_0$ is invertible and $\bK_h$ is admissible, we use continuity of inversion and of the principal logarithm to obtain
\begin{equation}\label{eq:Ahat_conv_compact}
 \widehat c=\mathcal R(h^{-1}\log\widehat{\bK}_h)
 \longrightarrow \mathcal R(h^{-1}\log\bK_h)=c_h.
\end{equation}
Paired data therefore recover $\bK_h$ and $c_h$ in the infinite-sample limit.
Increasing the sample size at fixed $h$ removes sampling error but leaves the
projection bias $c_h-c$. This convergence requires finite expectations for
the Gram-matrix entries and applies to independent trajectories or to a
stationary ergodic trajectory.

\paragraph{Sampling perturbations.}
Let $\bDelta_i=\widehat{\bG}_i-\bG_i$. At fixed invertible $\bG_0$, we use first-order matrix perturbation to obtain
\begin{align}
 \widehat{\bG}_0^{-1}
 &=\bG_0^{-1}-\bG_0^{-1}\bDelta_0\bG_0^{-1}
   +o_p(m^{-1/2}),\label{eq:gram_inverse_perturbation}\\
 \widehat{\bK}_h-\bK_h
 &=\bDelta_1\bG_0^{-1}
   -\bG_1\bG_0^{-1}\bDelta_0\bG_0^{-1}
   +o_p(m^{-1/2})\notag\\
 &=(\bDelta_1-\bK_h\bDelta_0)\bG_0^{-1}
   +o_p(m^{-1/2}).\label{eq:koopman_perturbation}
\end{align}
Applying the Fr\'echet derivative of the matrix logarithm and then the rate-extraction map, we obtain
\begin{equation}\label{eq:compact_sampling_chain}
 \widehat c-c_h
 =h^{-1}\mathcal R\!\left(D\log(\bK_h)\left[
 (\bDelta_1-\bK_h\bDelta_0)\bG_0^{-1}\right]\right)
 +o_p(m^{-1/2}).
\end{equation}
The covariance of the two Gram matrices is therefore propagated jointly through inversion, the matrix logarithm, and stoichiometric extraction. For $m$ independent trajectory contributions with finite second moments, the multivariate central limit theorem and delta method make the left side asymptotically normal, with covariance obtained from the displayed linear map. For independently sampled pairs, its short-interval rate variance is $\nu/h+O(1)$; hence the standard deviation is of order $(mh)^{-1/2}$. Whole-trajectory resampling is used below when pairs overlap.

\section{Reducing Finite-Time Bias with Reaction-Generated Observables}
\label{sec:block_analysis}

Projection introduces bias because an observable can leave the retained space
and return before the next observation. We describe these excursions in two
ways. A block expansion gives their order in the observation interval, while
a resolvent formula identifies the omitted Koopman modes that carry them.
Reaction-generated observables remove the shortest excursions from the
species rows used for rate extraction.

\subsection{Projection bias in the species rows}

Consider a full-support measure on a finite closed state space and let $P$ be the orthogonal projection onto the independent dictionary span $\calF$. Let
\begin{equation}\label{eq:matrix_chain}
\begin{aligned}
 \bK_h&\coloneqq P e^{h\calA}|_{\calF},\\
 \bA_h&\coloneqq h^{-1}\log \bK_h,\\
 c_h&\coloneqq\mathcal R(\bA_h).
\end{aligned}
\end{equation}
Here $\bK_h$ is the projected finite-time action, $\bA_h$ is its logarithmic
generator, and $c_h$ is the corresponding infinite-sample rate estimate. A
hat marks an empirical quantity. We call a real projected matrix
\emph{admissible} if it has no eigenvalues on $(-\infty,0]$, so that its real
principal logarithm exists. An observation interval is admissible when its
projected matrix is admissible.

Choose bases for $\calF$ and $\calF^\perp$ that are orthonormal in this measure. Together, these bases are adapted to $\calF\oplus\calF^\perp$. In the row-coefficient convention of EDMD, the full generator has the block form
\begin{equation}\label{eq:A_blocks}
 \bA_{\rm full}=
 \begin{bmatrix}
 \bA_{JJ}&\bA_{JJ'}\\
 \bA_{J'J}&\bA_{J'J'}
 \end{bmatrix},\qquad \bK_h=(e^{h\bA_{\rm full}})_{JJ}.
\end{equation}
Let $J$ index the retained observables and $J'$ their complement. The block
$\bA_{JJ'}$ contains the coefficients outside $\calF$ that appear when the
generator acts on a retained observable. The matrix $\bK_h$ is the
row-convention representation of $P e^{h\calA}|_{\calF}$, and the subscript
``sp'' selects the species rows used by $\mathcal R$. A change of retained
basis is absorbed into $\mathcal R$, so it does not change the extracted
rates.

\begin{lemma}[Projected Koopman operator]\label{lem:projected_operator}
For fixed projection and $h\to0$,
\begin{align}\label{eq:projected_operator_expansion}
 \bK_h={}&I+h\bA_{JJ}
 +\frac{h^2}{2}(\bA_{JJ}^2+\bA_{JJ'}\bA_{J'J})\notag\\
 &+\frac{h^3}{6}\bigl(\bA_{JJ}^3
 +\bA_{JJ}\bA_{JJ'}\bA_{J'J}
 +\bA_{JJ'}\bA_{J'J}\bA_{JJ}
 +\bA_{JJ'}\bA_{J'J'}\bA_{J'J}\bigr)+O(h^4).
\end{align}
\end{lemma}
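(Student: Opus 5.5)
The plan is to expand the matrix exponential of the full block generator and read off its retained block. On a finite closed state space, $\calA=\bA(c)^\top$ is a finite matrix. In the adapted basis orthonormal for the full-support measure, $\calF\oplus\calF^\perp$ is an orthogonal decomposition. Hence the orthogonal projection $P$ acts in coordinates as selection of the $J$ block, and $Pe^{h\calA}|_{\calF}$ is represented by $(e^{h\bA_{\rm full}})_{JJ}$, as recorded in \eqref{eq:A_blocks}.

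One point needs care: the row-coefficient convention. In that convention, composition of operators corresponds to products of the coefficient matrices in a consistent order, namely the representation of $\calA^j$ is $\bA_{\rm full}^j$. I will check this first. Because $\calA^j=\calA\cdot\calA^{j-1}$ and the same matrix is used on both sides, any reversal of order cannot mix different powers. The only place order matters is in the products of off-diagonal blocks, and the stated expansion is symmetric under reversal of each word's order up to the listed terms. Therefore $e^{h\calA}$ is represented by $\sum_{j\ge0}h^j\bA_{\rm full}^j/j!$, which converges absolutely because the matrix is finite.

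Taking the $JJ$ block is linear, so
\[
\bK_h=\sum_{j\ge0}\frac{h^j}{j!}(\bA_{\rm full}^j)_{JJ}.
\]
The remaining work is the block-power formula. Write $\bA_{\rm full}^j$ as a sum over index words $(I_0,I_1,\dots,I_j)$ with $I_0=I_j=J$ and each $I_i\in\{J,J'\}$; each word contributes the product of blocks $\bA_{I_0I_1}\bA_{I_1I_2}\cdots\bA_{I_{j-1}I_j}$.

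1. For $j=0$ and $j=1$ this gives $I$ and $\bA_{JJ}$.
2. For $j=2$, the middle index is $J$ or $J'$, giving $\bA_{JJ}^2+\bA_{JJ'}\bA_{J'J}$.
3. For $j=3$, the two middle indices range over $(J,J),(J,J'),(J',J),(J',J')$. These give $\bA_{JJ}^3$, $\bA_{JJ}\bA_{JJ'}\bA_{J'J}$, $\bA_{JJ'}\bA_{J'J}\bA_{JJ}$, and $\bA_{JJ'}\bA_{J'J'}\bA_{J'J}$, exactly the cubic terms in \eqref{eq:projected_operator_expansion}.

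This word-path interpretation is also the "excursion" picture described in the section's introduction.

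The tail $\sum_{j\ge4}h^j(\bA_{\rm full}^j)_{JJ}/j!$ is bounded in norm by
\[
\sum_{j\ge4}\frac{h^j\|\bA_{\rm full}\|^j}{j!}\le\frac{h^4\|\bA_{\rm full}\|^4}{24}\,e^{h\|\bA_{\rm full}\|},
\]
since a block of a matrix has norm at most that of the matrix. This is $O(h^4)$ as $h\to0$ with the projection fixed.

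The main obstacle is not the algebra but justifying the identification of the projected operator with the $JJ$ block. This requires the basis of $\calF^\perp$ to be orthogonal to $\calF$ in the measure defining $P$, so that no $J'$ coordinates feed back into the projection. I expect it to follow directly from the stated choice of orthonormal adapted bases. The finite closed state space removes all domain issues for the series.
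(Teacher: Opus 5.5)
Your proposal is correct and follows the paper's proof: expand $e^{h\bA_{\rm full}}$ through third order, take the $JJ$ block, and compute $(\bA_{\rm full}^2)_{JJ}$ and $(\bA_{\rm full}^3)_{JJ}$ by block multiplication. Your enumeration of index words is the same computation the paper does by first forming $(\bA_{\rm full}^2)_{J'J}$, and your tail bound $h^4\|\bA_{\rm full}\|^4e^{h\|\bA_{\rm full}\|}/24$ only makes the paper's $O(h^4)$ remainder explicit; the detour about reversing word order in the row convention is unnecessary, since the block formula holds for powers of any single matrix regardless of convention.
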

\begin{proof}
We use block multiplication to obtain the required blocks of the second power:
\begin{align*}
 (\bA_{\rm full}^2)_{JJ}&=\bA_{JJ}^2+\bA_{JJ'}\bA_{J'J},\\
 (\bA_{\rm full}^2)_{J'J}&=\bA_{J'J}\bA_{JJ}+\bA_{J'J'}\bA_{J'J}.
\end{align*}
We multiply once more to obtain
\begin{align*}
 (\bA_{\rm full}^3)_{JJ}
 &=\bA_{JJ}(\bA_{\rm full}^2)_{JJ}
   +\bA_{JJ'}(\bA_{\rm full}^2)_{J'J}\\
 &=\bA_{JJ}^3
   +\bA_{JJ}\bA_{JJ'}\bA_{J'J}
   +\bA_{JJ'}\bA_{J'J}\bA_{JJ}
   +\bA_{JJ'}\bA_{J'J'}\bA_{J'J}.
\end{align*}
Using $e^{h\bA_{\rm full}}=I+h\bA_{\rm full}+h^2\bA_{\rm full}^2/2 +h^3\bA_{\rm full}^3/6+O(h^4)$, its $JJ$ block is
\[
 (e^{h\bA_{\rm full}})_{JJ}
 =I+h\bA_{JJ}
 +\frac{h^2}{2}(\bA_{JJ}^2+\bA_{JJ'}\bA_{J'J})
 +\frac{h^3}{6}(\bA_{\rm full}^3)_{JJ}+O(h^4).
\]
We replace $(\bA_{\rm full}^3)_{JJ}$ by the expression above to obtain \eqref{eq:projected_operator_expansion}.
\end{proof}

\subsection{How projection changes the Koopman spectrum}

We interpret the projection in \eqref{eq:A_blocks} spectrally. If $\calF$ is invariant under $\calA$, then $\bA_{JJ'}=0$ and
\begin{equation}\label{eq:spectral_mapping}
 \bK_h=e^{h\bA_{JJ}},\qquad
 \sigma(\bK_h)=e^{h\sigma(\bA_{JJ})}.
\end{equation}
Thus the fitted Koopman eigenvalues are exponentials of generator eigenvalues. If $\sigma(h\bA_{JJ})$ lies in the open strip $\{z:-\pi<\operatorname{Im}z<\pi\}$, we use the principal logarithm to reverse this mapping and obtain $h^{-1}\log\bK_h=\bA_{JJ}$. Outside that strip, generator frequencies separated by $2\pi/h$ cannot be distinguished. These are the matrix-function form of the sampling and branch conditions for Koopman spectral inference~\cite{higham2008functions,chandrasekaran2016core}.

When the retained observable space is not invariant, paths through $J'$ alter the spectrum. Let $\Gamma$ be a contour in the common resolvent set that encloses the spectra of $\bA_{\rm full}$ and $\bA_{J'J'}$. We use block inversion and the Cauchy functional calculus to obtain
\begin{align}
 \bigl[(zI-\bA_{\rm full})^{-1}\bigr]_{JJ}
 &=\bigl[zI-\bA_{JJ}
 -\bA_{JJ'}(zI-\bA_{J'J'})^{-1}\bA_{J'J}\bigr]^{-1},
 \label{eq:block_resolvent}\\
 \bK_h
 &=\frac{1}{2\pi\mathrm i}\int_\Gamma e^{hz}
 \bigl[zI-\bA_{JJ}
 -\bA_{JJ'}(zI-\bA_{J'J'})^{-1}\bA_{J'J}\bigr]^{-1}\,dz.
 \label{eq:koopman_resolvent}
\end{align}

\begin{proposition}[Modal residues of the omitted space]
\label{prop:modal_residues}
Suppose $\bA_{J'J'}$ is diagonalizable, with right and left eigenvectors $\bm v_\ell$ and $\bm w_\ell$ normalized by $\bm w_\ell^*\bm v_k=\delta_{\ell k}$, and eigenvalues $\mu_\ell$, where $^*$ denotes conjugate transpose. Then
\begin{equation}\label{eq:modal_self_energy}
 \bA_{JJ'}(zI-\bA_{J'J'})^{-1}\bA_{J'J}
 =\sum_\ell\frac{\bm C_\ell}{z-\mu_\ell},
 \qquad
 \bm C_\ell=\bA_{JJ'}\bm v_\ell\bm w_\ell^*\bA_{J'J}.
\end{equation}
Consequently, an omitted mode enters the projected Koopman operator through its decay rate $\mu_\ell$ and coupling residue $\bm C_\ell$. If $\bm C_\ell=0$, that mode is invisible to the retained finite-time operator.
\end{proposition}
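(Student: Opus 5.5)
The plan is to diagonalize the omitted block and read off the resolvent term by term. Because $\bA_{J'J'}$ is diagonalizable with biorthogonal right and left eigenvectors, I first write its spectral decomposition
\[
 \bA_{J'J'}=\sum_\ell \mu_\ell\,\bm v_\ell\bm w_\ell^*,
 \qquad
 I=\sum_\ell \bm v_\ell\bm w_\ell^* .
\]
The second identity holds because $V W^*=I$ when $W^*V=I$ and $V$ is square and invertible. From these two formulas, for every $z\notin\sigma(\bA_{J'J'})$ I verify
\[
 (zI-\bA_{J'J'})^{-1}=\sum_\ell \frac{\bm v_\ell\bm w_\ell^*}{z-\mu_\ell}.
\]
The check is direct: multiply by $zI-\bA_{J'J'}=\sum_k (z-\mu_k)\bm v_k\bm w_k^*$ and use $\bm w_\ell^*\bm v_k=\delta_{\ell k}$.

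Next I sandwich this expansion between $\bA_{JJ'}$ on the left and $\bA_{J'J}$ on the right. By linearity this gives \eqref{eq:modal_self_energy} with $\bm C_\ell=\bA_{JJ'}\bm v_\ell\bm w_\ell^*\bA_{J'J}$, which is a rank-at-most-one residue at the simple pole $z=\mu_\ell$. When $\mu_\ell$ has multiplicity, each eigenvector pair contributes its own term. These terms can be grouped into a single residue per distinct eigenvalue without changing the argument.

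To obtain the ``consequently'' statement, I substitute the expansion into \eqref{eq:koopman_resolvent}:
\[
 \bK_h=\frac{1}{2\pi\mathrm i}\int_\Gamma e^{hz}\Bigl[zI-\bA_{JJ}-\sum_\ell \frac{\bm C_\ell}{z-\mu_\ell}\Bigr]^{-1}dz .
\]
In this formula the omitted space enters only through the pairs $(\mu_\ell,\bm C_\ell)$. If $\bm C_\ell=0$, the corresponding term vanishes identically in $z$, so the integrand does not depend on that mode. The mode's pole at $\mu_\ell$ is then a removable feature of the Schur complement. Because $\Gamma$ encloses all relevant spectra and the integrand is unchanged, the contour integral, and hence $\bK_h$, coincides with the one computed with that mode deleted. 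In this sense the mode is invisible.

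I expect the only delicate point to be the meaning of ``invisible.'' It must be stated at the level of the integrand rather than the eigenvalues of $\bA_{\rm full}$. The eigenvalue $\mu_\ell$ may still belong to $\sigma(\bA_{\rm full})$, but it produces no singularity in the $JJ$ block of the resolvent. I would note this explicitly and otherwise keep the proof to the two displayed identities.
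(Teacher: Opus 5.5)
Your proposal is correct and follows essentially the same route as the paper. Both substitute the spectral resolution of $(zI-\bA_{J'J'})^{-1}$ into the Schur-complement term of the block resolvent, then read the invisibility claim off the contour-integral representation of $\bK_h$; your explicit check of the resolvent expansion and your remark that ``invisible'' is a statement about the integrand rather than about $\sigma(\bA_{\rm full})$ are harmless elaborations.
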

\begin{proof}
We substitute the spectral resolution $(zI-\bA_{J'J'})^{-1}=\sum_\ell \bm v_\ell\bm w_\ell^*/(z-\mu_\ell)$ into \eqref{eq:block_resolvent} to obtain~\eqref{eq:modal_self_energy}. We obtain the final statement from~\eqref{eq:koopman_resolvent}.
\end{proof}

The middle term in \eqref{eq:block_resolvent} represents a
frequency-dependent return through the omitted observables. It shows how
unresolved modes perturb the projected Koopman spectrum. When either coupling
block vanishes, \eqref{eq:spectral_mapping} is recovered. This
Schur-complement resolvent is standard in invariant-subspace and
matrix-function analysis~\cite{horn2013matrix,higham2008functions}.

Equation~\eqref{eq:modal_self_energy} also connects the analysis with the
stochastic Koopman approximation of Gupta and
Khammash~\cite{gupta2025srnkoopman}. They estimate iterated resolvents to
recover dominant decay modes of a specified reaction network. Here each pole
$\mu_\ell$ is paired with a residue $\bm C_\ell$ that measures how the mode
leaves and re-enters the retained observable space. Their method uses dominant
modes for prediction. We use the mode--residue pairs to explain bias in an
inferred generator. Eigenvalues alone are insufficient because a slowly
decaying mode with a small residue can have little effect on the recovered
rates.

The same calculus separates the size of the projection defect from the
conditioning of the logarithm. Suppose a contour $\Xi$ surrounds the spectra
of $\bK_h$ and $e^{h\bA_{JJ}}$ without crossing $(-\infty,0]$, and the strip
condition above holds. Then
\begin{equation}\label{eq:log_resolvent_identity}
 \log\bK_h-h\bA_{JJ}
 =\frac{1}{2\pi\mathrm i}\int_\Xi \log z\,
 (zI-\bK_h)^{-1}(\bK_h-e^{h\bA_{JJ}})
 (zI-e^{h\bA_{JJ}})^{-1}\,dz.
\end{equation}
We take norms in \eqref{eq:log_resolvent_identity} to bound the rate error by the projection defect, amplified by the two resolvents and by $1/h$ after applying $\mathcal R$. The contour can cease to exist when the projected spectrum reaches the principal-logarithm cut. The expansion below complements this global spectral statement by resolving the first small-$h$ contribution that survives in the selected species rows.

\subsection{Added observables and bias order}

To remove these paths through a chosen order, we retain $\calA x_s,\ldots,\calA^r x_s$ for each species used in rate extraction. Only these species rows must close; the full dictionary need not be invariant. The first surviving rate error is determined by one excursion from the retained space into its complement and back.

\begin{defn}[Species closure depth]\label{def:closure_depth}
The dictionary space $\calF$ has \emph{closure depth $r$} on a collection of species observables if
\begin{equation}\label{eq:closure_depth}
 \calA^j x_s\in\calF,\qquad j=1,\ldots,r,
\end{equation}
for every species used by the rate-extraction map.
\end{defn}

\begin{theorem}[Rate bias from finite-depth closure]\label{thm:generator_depth}
Let the state space be finite and closed, and let a full-support measure independent of $h$ induce $P$. Suppose that $r\geq1$ and
\[
 \calA^j x_s\in\calF,\qquad j=1,\ldots,r,
\]
for every species used by the extraction map. If $h\|\bA_{\rm full}\|<\log 2$, then
\begin{equation}
 c_h-c
 =-\frac{B_{r+1}}{(r+1)!}h^{r+1}
  \mathcal R(\bA_{JJ}^{r}\bA_{JJ'}\bA_{J'J})
 +O(h^{r+2}).\label{eq:generator_depth_order}
\end{equation}
where we let $B_n$ denote the Bernoulli numbers satisfying $z/(e^z-1)=\sum_{n=0}^{\infty}B_nz^n/n!$. Consequently,
\begin{equation}\label{eq:depth_parity_order}
 c_h-c=
 \begin{cases}
  O(h^{r+1}),&r\text{ odd},\\
  O(h^{r+2}),&r\text{ even}.
 \end{cases}
\end{equation}
For odd $r$, the order is exactly $r+1$ if and only if $\mathcal R(\bA_{JJ}^{r}\bA_{JJ'}\bA_{J'J})\ne0$.
\end{theorem}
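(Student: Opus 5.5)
The plan is to expand $\log\bK_h$ as a power series in $h$, with $\bK_h=(e^{h\bA_{\rm full}})_{JJ}$ written as a sum of noncommutative words in the blocks. Closure depth will kill every word in the species rows below order $h^{r+2}$, and a scalar generating-function computation will produce the Bernoulli coefficient. The hypothesis $h\|\bA_{\rm full}\|<\log 2$ gives $\|\bK_h-I\|\le e^{h\|\bA_{\rm full}\|}-1<1$. Hence the logarithm series $\log(I+Z)=\sum(-1)^{n+1}Z^n/n$ converges absolutely, $\bA_h$ is analytic in $h$, and term-by-term rearrangement is legitimate. Since $\mathcal R$ is linear and $\mathcal R(\bA_{JJ})=c$ by Corollary~\ref{cor:rates}, it suffices to show that the species rows satisfy
\[
 e_s^\top(\log\bK_h-h\bA_{JJ})=-\frac{B_{r+1}}{(r+1)!}h^{r+2}e_s^\top\bA_{JJ}^r\bA_{JJ'}\bA_{J'J}+O(h^{r+3}).
\]

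\textbf{Step 1: translate closure into block identities.} In the row convention, $\calA^jx_s\in\calF$ means that the $J'$ part of the row $e_s^\top\bA_{\rm full}^j$ vanishes for $j\le r$. By induction on $j$, $e_s^\top\bA_{\rm full}^j=(e_s^\top\bA_{JJ}^j,\,0)$ for $j\le r$, and
\[
 e_s^\top\bA_{JJ}^k\bA_{JJ'}=0,\qquad k=0,\ldots,r-1.
\]
Every term in the power series of $\log\bK_h$ is a word in $\bA_{JJ}$ and $Q\coloneqq\bA_{JJ'}\bA_{J'J}$, or more generally in $\bA_{JJ'}\bA_{J'J'}^i\bA_{J'J}$. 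This follows by extending the block computation of Lemma~\ref{lem:projected_operator}. A word that is not a pure power of $\bA_{JJ}$ has the form $\bA_{JJ}^k\bA_{JJ'}\cdots$ with its first excursion after $k$ retained steps. Left-multiplied by $e_s^\top$, it vanishes unless $k\ge r$, and it then has total degree at least $k+2\ge r+2$. Degree $r+2$ is attained only by the single word $\bA_{JJ}^rQ$; in particular, words with two excursions or with $\bA_{J'J'}$ factors contribute only $O(h^{r+3})$. The pure powers of $\bA_{JJ}$ reassemble into $\log e^{h\bA_{JJ}}=h\bA_{JJ}$, using the convergence above. So the whole task reduces to finding the coefficient of the word $\bA_{JJ}^rQ$ in $\log\bK_h$.

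\textbf{Step 2: compute that coefficient.} I expect this to be the main obstacle, since left and right multiplications by $\bA_{JJ}$ mix species and non-species rows. I will work to first order in $Q$. Write $\bK_h=e^{X}+E$ with $X=h\bA_{JJ}$, where $E=\sum_{n\ge2}\frac{h^n}{n!}\sum_{i+j=n-2}\bA_{JJ}^iQ\bA_{JJ}^j+(\text{higher})$. The Fr\'echet derivative $D\log(e^X)[E]=g(L_X,R_X)E$ has divided-difference symbol
\[
 g(a,b)=\frac{a-b}{e^a-e^b},
\]
where $L_X$ and $R_X$ denote the commuting left and right multiplication operators. Because the target word has all $\bA_{JJ}$ factors to the left of $Q$, only terms with no right multiplication matter, so I can set $b=0$ and $j=0$. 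In $E$ this leaves $h^2\,\frac{e^{a}-1-a}{a^2}\,Q$ with $a=L_X$, and $g(a,0)=a/(e^a-1)$. The product is
\[
 \Bigl(\frac1a-\frac1{e^a-1}\Bigr)h^2Q=-\sum_{n\ge1}\frac{B_n}{n!}a^{n-1}h^2Q,
\]
whose $a^r$ coefficient is $-B_{r+1}/(r+1)!$. With $a^r=h^r\bA_{JJ}^r$, dividing by $h$ and applying $\mathcal R$ yields \eqref{eq:generator_depth_order}. I would double-check this against Lemma~\ref{lem:projected_operator} for $r=0$: the formula gives coefficient $-B_1=\tfrac12$ on $h^2Q$, which matches the expansion of $\log\bK_h$ obtained from that lemma.

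\textbf{Step 3: parity.} Since $B_n=0$ for odd $n\ge3$, the leading coefficient vanishes when $r\ge2$ is even, so $c_h-c=O(h^{r+2})$. When $r$ is odd, $B_{r+1}\ne0$, so the $h^{r+1}$ term is present exactly when $\mathcal R(\bA_{JJ}^r\bA_{JJ'}\bA_{J'J})\ne0$. This proves \eqref{eq:depth_parity_order} and the final sharpness statement.
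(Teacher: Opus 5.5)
Your proposal is correct and follows the paper's proof: closure forces $[\bA_{JJ}^{k}\bA_{JJ'}]_{\rm sp}=0$ for $k<r$, so the first word that survives in a species row is $\bA_{JJ}^{r}\bA_{JJ'}\bA_{J'J}$; its coefficient is read off from $1/z-1/(e^z-1)$, and the parity statement follows from $B_{r+1}=0$ for even $r\geq2$. Your Step~2 obtains this generating function as the product of $g(a,0)=a/(e^a-1)$, from the divided-difference form of $D\log(e^{X})$, with $(e^a-1-a)/a^2$, from the single-excursion part of $E$; this makes explicit the coefficient computation that the paper states without derivation.
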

\begin{proof}
The closure conditions imply
\begin{equation}\label{eq:depth_block_zero}
 [\bA_{JJ}^{j-1}\bA_{JJ'}]_{\rm sp}=0,
 \qquad j=1,\ldots,r.
\end{equation}
Indeed, after $j-1$ generator applications the selected species rows have no complementary component; after one further application, we obtain the expression in \eqref{eq:depth_block_zero}. Any term that leaves the retained space must later return through $\bA_{J'J}$. The first such word that can survive in a selected species row is therefore
\begin{equation}\label{eq:first_excursion_product}
 [\bA_{JJ}^{r}\bA_{JJ'}\bA_{J'J}]_{\rm sp}.
\end{equation}

Insert the exponential expansion of Lemma~\ref{lem:projected_operator} into the ordinary power series for $\log\bK_h$ and collect this word. At its first possible degree, terms containing $\bA_{J'J'}$, a retained factor to the right of $\bA_{J'J}$, or a second excursion cannot contribute. The remaining coefficients are generated by
\begin{equation}\label{eq:bernoulli_generating_function}
 \frac1z-\frac1{e^z-1}
 =\frac12-\frac{z}{12}+\frac{z^3}{720}-\cdots.
\end{equation}
Hence the coefficient of the word in \eqref{eq:first_excursion_product} is
$-B_{r+1}/(r+1)!$. Dividing by $h$ and applying $\mathcal R$ gives
\eqref{eq:generator_depth_order}. Under the condition
$h\|\bA_{\rm full}\|<\log 2$, the logarithm series converges and the
remainder is uniform. Finally, $B_{r+1}=0$ for positive even $r$, which
proves~\eqref{eq:depth_parity_order}.
\end{proof}

\begin{remark}[Closure as spectral decoupling]
Under condition~\eqref{eq:depth_block_zero}, every short block walk vanishes
if it starts in an extracted species row and enters the omitted space during
the first $r$ generator applications. The full Koopman spectrum is unchanged,
but the retained--omitted decomposition prevents the modal residues in
Proposition~\ref{prop:modal_residues} from reaching the extracted rows along
these short paths. Theorem~\ref{thm:generator_depth} gives the resulting delay
in the rate bias.
\end{remark}

\begin{corollary}[Depth-one projection bias]\label{prop:projection_log}
If $\calA x_s\in\calF$ for every species used by the extraction map, then
\begin{equation}
 c_h-c
 =-\frac{h^2}{12}
 \mathcal R(\bA_{JJ}\bA_{JJ'}\bA_{J'J})+O(h^3).
 \label{eq:rate_block_bias}
\end{equation}
\end{corollary}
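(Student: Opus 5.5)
This is the case $r=1$ of Theorem~\ref{thm:generator_depth}, so I will specialize that theorem rather than redo the series argument. The hypothesis $\calA x_s\in\calF$ for every extracted species is exactly the closure condition of Definition~\ref{def:closure_depth} with $r=1$. I will also assume $h\|\bA_{\rm full}\|<\log 2$, the finite closed state space, and an $h$-independent full-support measure, all carried over from the theorem. The theorem then gives
\[
 c_h-c=-\frac{B_2}{2!}h^{2}\mathcal R(\bA_{JJ}\bA_{JJ'}\bA_{J'J})+O(h^{3}).
\]
Since $B_2=1/6$ with the convention $z/(e^z-1)=\sum B_nz^n/n!$, the coefficient is $-1/12$, which is \eqref{eq:rate_block_bias}.

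For a self-contained check, I would verify this coefficient directly from Lemma~\ref{lem:projected_operator}. Write $\bK_h=I+E$ with
\[
 E=h\bA_{JJ}+\tfrac{h^2}{2}(\bA_{JJ}^2+\bA_{JJ'}\bA_{J'J})+\tfrac{h^3}{6}(\cdots)+O(h^4),
\]
and expand $\log(I+E)=E-E^2/2+E^3/3-\cdots$ through order $h^3$. The species rows of $\bA_{JJ'}$ vanish, because $P\calA x_s=\calA x_s$ has no complementary component. Hence, in the species rows, every word that begins with $\bA_{JJ'}$ drops out, and so does the $h^2$ term $\tfrac12\bA_{JJ'}\bA_{J'J}$.

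At order $h^3$, the surviving words containing one excursion are $\bA_{JJ}\bA_{JJ'}\bA_{J'J}$ and $\bA_{JJ'}\bA_{J'J}\bA_{JJ}$; the second also dies in the species rows. The word $\bA_{JJ}\bA_{JJ'}\bA_{J'J}$ gets $1/6$ from $E$. From $-E^2/2$, the cross term $h\bA_{JJ}\cdot\tfrac{h^2}{2}\bA_{JJ'}\bA_{J'J}$ contributes $-\tfrac12\cdot\tfrac12=-\tfrac14$. The $E^3/3$ term contributes only $\bA_{JJ}^3$. The total coefficient is $\tfrac16-\tfrac14=-\tfrac1{12}$.

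The pure $\bA_{JJ}$ words sum to $h\bA_{JJ}$, just as $\log e^{h\bA_{JJ}}$ does, so they produce no bias. Dividing by $h$ and applying $\mathcal R$, with $\mathcal R(\bA_{JJ})=c$ by Corollary~\ref{cor:rates}, gives the claim.

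The only delicate point is that the $O(h^3)$ remainder is uniform after division by $h$. Convergence of the logarithm series under $h\|\bA_{\rm full}\|<\log 2$ (since $\|E\|\le e^{h\|\bA_{\rm full}\|}-1<1$) handles this, as in the theorem.
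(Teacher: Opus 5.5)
Your proposal is correct and matches the paper's proof, which sets $r=1$ in Theorem~\ref{thm:generator_depth} and uses $B_2=1/6$. Your direct check from Lemma~\ref{lem:projected_operator} is also correct: the word $\bA_{JJ}\bA_{JJ'}\bA_{J'J}$ gets coefficient $\tfrac16-\tfrac14=-\tfrac1{12}$, and words beginning with $\bA_{JJ'}$ vanish in the species rows. That check is the $r=1$ instance of the theorem's own series argument.
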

\begin{proof}
Let $r=1$ in Theorem~\ref{thm:generator_depth} and use $B_2=1/6$.
\end{proof}

The specified reaction channels generate a space with the closure required by
Definition~\ref{def:closure_depth}. Let $\calF_0$ be a base dictionary that
contains the constant, the species, and the propensity functions needed for
extraction, and let
\begin{equation}\label{eq:channel_product_space}
 \calF_r
 =\calF_0+\spn\{\calA_{k_j}\cdots\calA_{k_1}x_s:
       1\leq j\leq r,\ 1\leq k_1,\ldots,k_j\leq M,\
       s\text{ is used by }\mathcal R\}.
\end{equation}
Zero functions and linear combinations already present are discarded when the dictionary is assembled.

\begin{corollary}[Closure under reaction operators]\label{cor:channel_closure}
For every rate vector $c$, the space $\calF_r$ has closure depth $r$ on the selected species. Consequently, its rate bias has the order in \eqref{eq:depth_parity_order}.
\end{corollary}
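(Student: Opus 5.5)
The plan is to show that for each selected species $s$ and each $1\le j\le r$, the iterate $\calA(c)^j x_s$ lies in $\calF_r$ for every rate vector $c$. Theorem~\ref{thm:generator_depth} then applies directly. We use the channel decomposition $\calA(c)=\sum_{k=1}^M c_k\calA_k$ from \eqref{eq:channel_generator}. Each $\calA_k$ is linear and independent of $c$, and $\calA(c)$ is linear in $c$. So expanding the power gives
\[
 \calA(c)^j x_s=\sum_{k_1,\ldots,k_j=1}^M c_{k_j}\cdots c_{k_1}\,\calA_{k_j}\cdots\calA_{k_1}x_s,
\]
exactly as in the introduction. This is a finite linear combination, with scalar coefficients that are products of rates, of the words $\calA_{k_j}\cdots\calA_{k_1}x_s$ for $1\le j\le r$. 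By \eqref{eq:channel_product_space}, every such word belongs to the spanning set of $\calF_r$. Since $\calF_r$ is a linear space, $\calA(c)^jx_s\in\calF_r$ for $j=1,\ldots,r$, whatever the value of $c$. This is precisely condition \eqref{eq:closure_depth}.

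The one point to check is that discarding zero functions and linear dependencies when the dictionary is assembled does not change the span. Removing a zero vector, or a vector already in the span of the others, leaves the span unchanged. So the assembled independent dictionary still spans $\calF_r$. We also check that $\calF_r$ keeps the constant, the species, and the propensity monomials, because it contains $\calF_0$. Hence the species-row structure and the extraction map $\mathcal R$ of Corollary~\ref{cor:rates} are unchanged. On a finite closed state space, all these functions are finite vectors, so no domain issues arise.

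Finally, we invoke Theorem~\ref{thm:generator_depth} with $\calF=\calF_r$. The state space is finite and closed, and the full-support measure is fixed independently of $h$. With $h\|\bA_{\rm full}\|<\log 2$, the theorem gives \eqref{eq:generator_depth_order} and hence the parity orders \eqref{eq:depth_parity_order}. The only subtlety is that the constant in the $O(\cdot)$ depends on $c$ through the blocks of $\bA_{\rm full}$. The order statement, however, holds for every $c$, because closure depth $r$ was established uniformly in $c$. I expect no real obstacle. The step needing the most care is the observation that membership in $\calF_r$ cannot depend on $c$, because the spanning words do not involve the rates. This is what makes the enlargement rate-independent.
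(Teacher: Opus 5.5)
Your proposal is correct and follows the paper's proof. You expand $\calA(c)^j x_s$ over channel words using $\calA=\sum_k c_k\calA_k$, observe that each word lies in the rate-independent span defining $\calF_r$, and then invoke Theorem~\ref{thm:generator_depth}. Your extra checks are sound details that the paper leaves implicit: the span is unchanged when dependent functions are discarded, $\calF_0\subseteq\calF_r$ preserves $\mathcal R$, and the theorem's hypotheses on the state space, the measure, and $h$ must still be assumed.
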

\begin{proof}
We use $\calA=\sum_{k=1}^M c_k\calA_k$ repeatedly. For each $j=1,\ldots,r$ and each selected species $x_s$, we expand
\begin{equation}\label{eq:channel_product_expansion}
 \calA^j x_s
 =\sum_{k_1,\ldots,k_j=1}^M
   c_{k_j}\cdots c_{k_1}
   \calA_{k_j}\cdots\calA_{k_1}x_s.
\end{equation}
Every function on the right belongs to the span in \eqref{eq:channel_product_space}. Hence $\calA^j x_s\in\calF_r$ for every $j\leq r$, and the space has closure depth $r$. We now apply Theorem~\ref{thm:generator_depth} to obtain the stated bias order.
\end{proof}

\begin{remark}[Partial closure]\label{rem:partial_closure}
The order theorem requires the complete space in \eqref{eq:channel_product_space}. The Lotka--Volterra and Brusselator comparisons use this complete depth-two space. The Michaelis--Menten and MAPK comparisons retain selected directions from it. Under depth-one closure, the complementary coefficients of $\calA^2x_s$ are $[\bA_{JJ}\bA_{JJ'}]_{\rm sp}$. The leading bias therefore satisfies
\[
 \bigl\|\mathcal R(\bA_{JJ}\bA_{JJ'}\bA_{J'J})\bigr\|
 \leq \|\mathcal R\|\,
 \|[\bA_{JJ}\bA_{JJ'}]_{\rm sp}\|\,\|\bA_{J'J}\|.
\]
Adding directions from the residuals of $\calA^2x_s$ reduces the middle
factor. A rate-independent partial-closure rule orders the empirical residual
directions by singular value and retains the largest observed $L^2$
components first. The final factor and the extraction map also enter the
bound, so partial closure need not reduce every rate error monotonically.
\end{remark}

\subsection{Constructing the added observables}
We construct the spaces in~\eqref{eq:channel_product_space} one depth at a time. At each depth, we apply every unit-rate reaction operator to the functions added at the preceding depth and discard zero functions and linear combinations already represented by the dictionary. Before this reduction, there are $M^j$ ordered sequences of $j$ reaction operators. Different sequences may produce the same polynomial or a function already represented by the dictionary, so several sequences can collapse to one coordinate.

For the reversible dimerization model used below, let channel 1 be $2X\to Y$ and channel 2 be $Y\to2X$. Starting from $\{1,Y,\binom X2\}$, the functions $\calA_1Y$ and $\calA_2Y$ already lie in the base dictionary. At length two,
\[
 \calA_1\calA_1Y=\binom X2(3-2X),\qquad
 \calA_2\calA_1Y=Y(2X+1),\qquad
 \calA_1\calA_2Y=-\binom X2,\qquad
 \calA_2\calA_2Y=Y.
\]
Using the conservation law $X+2Y=24$, the last three functions lie in the base dictionary. We add the component of $\binom X2(3-2X)$ that is independent of the existing functions. One new coordinate is enough to enforce depth-two closure in this example.

From Theorem~\ref{thm:generator_depth}, we obtain orders of at least $2,4,4,$ and $6$ at depths one through four. The repeated fourth order has a different explanation at the two middle depths: the depth-two Bernoulli coefficient is zero, while the depth-three coefficient is $1/720$. The depth-three function can therefore reduce the fourth-order coefficient without raising the order. For polynomial libraries, each additional closure level raises the required degree by at most $\max_k\deg\phi_{\bnu_k}-1$. The Brusselator has a trimolecular channel, so a full polynomial dictionary that enforces depth two must include terms through degree five.

\section{Numerical Validation}
\label{sec:results}
% =============================================================================

The numerical experiments separate the three errors that govern rate
recovery. Exact finite-state matrices isolate projection bias and verify its
predicted order. Simulated trajectories then show how sampling error,
projection bias, and the logarithm branch condition interact in finite data.

\subsection{Experimental setup}

The dimerization model resolves finite-time bias by closure depth and omitted
Koopman mode. A sweep over observation intervals for the Brusselator follows
rate recovery to the principal-logarithm branch cut. The final comparison
tests complete or selected reaction-generated closure on oscillatory,
conserved, higher-order, and multi-tier networks.

The dimerization state space is small enough to enumerate. Lotka--Volterra supplies an oscillatory network; Michaelis--Menten tests conservation reduction and shared propensities; the Brusselator contains a third-order propensity; and a reduced MAPK cascade supplies a signaling network with a large conservation class.

For each network, we fix the species coefficients used to extract its rates and retain the constant, the physical scaling, and the full-rank EDMD fit as we change dictionaries. The Michaelis--Menten conservation class is finite. Lotka--Volterra and the Brusselator have unbounded natural state spaces and therefore test the method beyond the setting of the finite-state bias theorem.

\subsection{Exact finite-state validation}
\label{sec:dimerization}

Consider the reversible dimerization network
\[
 2X\xrightarrow{c_1}Y,\qquad
 Y\xrightarrow{c_2}2X,\qquad c_1=0.025,\quad c_2=0.7.
\]
Using the conservation law $X+2Y=24$, we reduce the model to 13 states and enumerate them to form the exact Gram matrices from the full matrix exponential. We denote the initial probability of state $i$ by $\omega_i$ and choose these probabilities proportional to $1,\ldots,13$ in increasing $Y$ order. The dictionary $\bpsi=(1,Y,\binom X2)^\top$ contains $\calA Y=c_1\binom X2-c_2Y$. Repeated generator applications leave this dictionary. No simulated data enter the resulting bias calculation.

Figure~\ref{fig:dimerization_modal_residues} resolves the projection error by
omitted Koopman mode. We diagonalize the complementary block in the weighted
orthogonal decomposition used by the population EDMD calculation. At
$h=0.1$, the largest residue belongs to
$\mu=-8.38\mathbin{\pm}0.30\mathrm i$. Decoupling this conjugate pair changes
the inferred relative-rate vector by $4.84\mathrm{e}{-4}$. The slower pair
$-4.75\mathbin{\pm}1.09\mathrm i$ has a smaller residue and changes the vector
by $3.07\mathrm{e}{-4}$, while the modes near $-10.8$ and $-11.8$ have effects
below $10^{-6}$. Decay rate alone therefore does not rank the modes that
matter for inference. Their coupling residues are also needed. The
leave-one-mode-out changes are diagnostic rather than additive because the
matrix logarithm couples all modes.

\begin{figure}[H]
\centering
\includegraphics[width=0.88\textwidth]{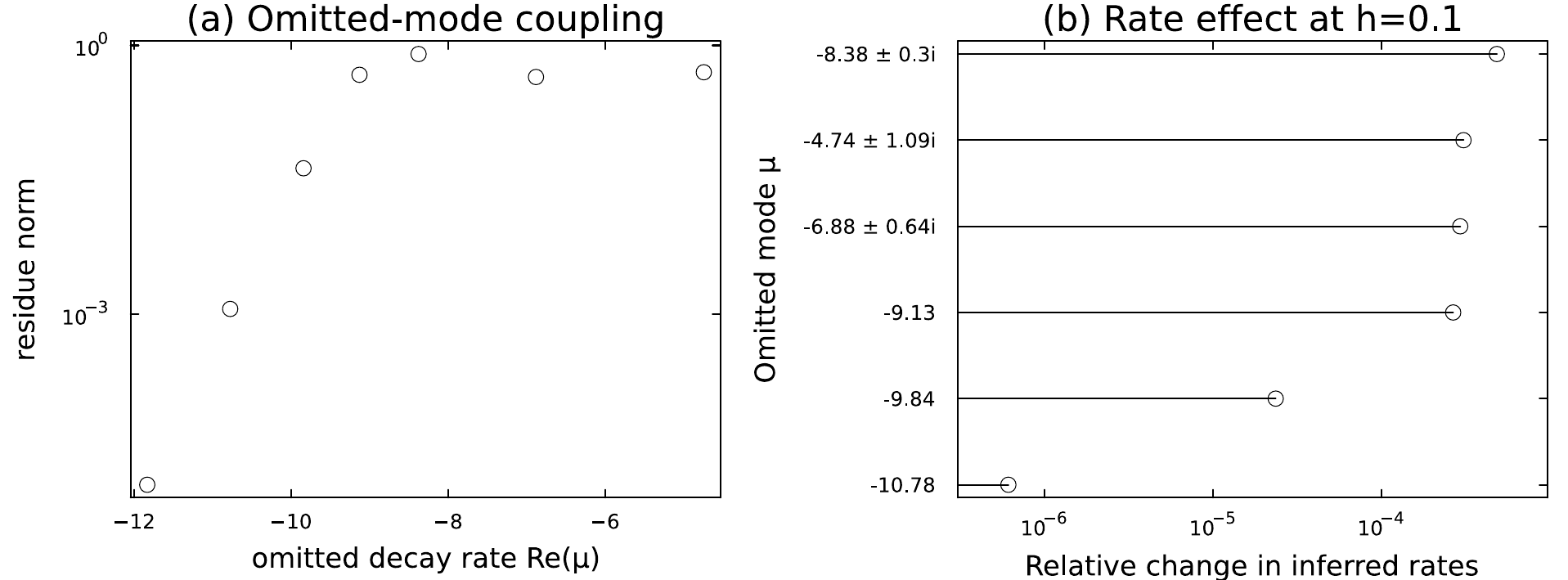}
\caption{Omitted Koopman modes in the exact dimerization calculation. (a)
Eigenvalues of the complementary generator block plotted against the
Frobenius norm of their coupling residues $\bm C_\ell$. Conjugate pairs are
grouped. (b) Change in the inferred relative-rate vector after decoupling one
mode or conjugate pair at $h=0.1$. This leave-one-mode-out quantity measures
influence through the nonlinear logarithm and is not an additive decomposition
of the total bias.}
\label{fig:dimerization_modal_residues}
\end{figure}

\subsubsection{Finite-depth generator closure}
\label{sec:krylov_enrichment}
We test the order theorem without truncation or Monte Carlo error in the finite dimerization model by adding $\calA^j x_s$ to the dictionary for successive values of $j$. Starting with $\{1,Y,\binom{X}{2}\}$, we add the weighted-orthogonal residuals of $\calA^2Y,\calA^3Y$, and $\calA^4Y$ one at a time. We call the largest retained power the closure depth. For this network, the same nested spaces are obtained by successively applying the reaction operators in Corollary~\ref{cor:channel_closure} and removing linearly dependent functions.

\paragraph{Order verification}
In Figure~\ref{fig:krylov_depth_population}(a), we plot the infinite-sample bias across all four depths, for which the fitted slopes are $1.996$, $3.994$, $4.084$, and $5.994$. These values agree with the orders $2,4,4,$ and $6$ in Theorem~\ref{thm:generator_depth}, so we conclude that its lower bounds are attained in this model. At depth three, we use the theorem to obtain the fourth-order coefficient $(1/720)\mathcal R(\bA_{JJ}^3\bA_{JJ'}\bA_{J'J})$. For the two rates, its values are $-3.531\mathrm{e}{-6}$ and $1.053\mathrm{e}{-4}$; by directly evaluating $(c_h-c)/h^4$ at $h=5.00\mathrm{e}{-4}$, we obtain $-3.544\mathrm{e}{-6}$ and $1.056\mathrm{e}{-4}$. The depth-three function therefore changes the fourth-order coefficient, and the depth-four Bernoulli coefficient vanishes.

\paragraph{Bias and sampling error}
An enlarged dictionary has lower first-order mean squared error when its reduction in squared projection bias exceeds its increase in sampling variance. We compare closure depths using a common measure of physical rate error. For $M$ rates, let $v_k(h)$ be the single-unit asymptotic variance coefficient for rate $k$ obtained from the linearized estimator. With $m$ independent sampling units, the first-order relative mean squared error is
\begin{equation}\label{eq:mse_criterion}
 \operatorname{MSE}(h,m)
 \approx \frac1M\sum_{k=1}^M
 \left(\frac{c_{h,k}-c_k}{c_k}\right)^2
 +\frac1{mM}\sum_{k=1}^M
 \frac{v_k(h)}{c_k^2}.
\end{equation}
The first sum is the squared projection bias. The second is the sampling variance, with $v_k(h)$ predicted by the Gram-matrix perturbation calculation in~\eqref{eq:compact_sampling_chain} for one sampling unit and then divided by $m$. The sampling unit is an independent pair, an independent triple, or a whole trajectory, according to the estimator and the way observations are collected.

The short-interval variance following
\eqref{eq:compact_sampling_chain} explains why closure should be compared
after choosing an interval for each dictionary. The following corollary
separates this effect from the fixed-interval comparison below.
\begin{corollary}[Optimized error rate]\label{cor:optimal_interval}
Suppose that the aggregate relative bias is $\beta h^p+o(h^p)$ and the single-pair relative variance coefficient is $\nu/h+O(1)$, where $\beta,\nu>0$. Then the leading mean squared error is minimized at
\begin{equation}\label{eq:optimal_interval}
 h_*(m)=\left(\frac{\nu}{2p\beta^2m}\right)^{1/(2p+1)},
\end{equation}
and its minimum equals
\begin{equation}\label{eq:optimal_mse_rate}
 (2p+1)\beta^2
 \left(\frac{\nu}{2p\beta^2m}\right)^{2p/(2p+1)}.
\end{equation}
Thus the ordinary depth-one logarithm has the rate $m^{-4/5}$, while depth-two closure has the rate $m^{-8/9}$ when its fourth-order coefficient is nonzero.
\end{corollary}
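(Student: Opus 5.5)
We prove this by minimizing an explicit one-variable function of $h$. Summing the rates in \eqref{eq:mse_criterion} gives the leading mean squared error
\[
 E(h)=\beta^2h^{2p}+\frac{\nu}{mh},
\]
where the squared bias uses the hypothesis $\beta h^p+o(h^p)$ and the variance uses $\nu/h+O(1)$ divided by $m$. The remainders $o(h^{2p})$ and $O(1/m)$ are of lower order along the optimizing scale $h\to0$ as $m\to\infty$, so they are dropped; this is what ``leading'' means here. On $h>0$ the function $E$ is smooth. It tends to $+\infty$ as $h\downarrow0$ and as $h\to\infty$, so it attains an interior minimum at a critical point.

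First we differentiate:
\[
 E'(h)=2p\beta^2h^{2p-1}-\frac{\nu}{mh^2}.
\]
Setting $E'(h)=0$ gives $h^{2p+1}=\nu/(2p\beta^2m)$, which is \eqref{eq:optimal_interval}. The critical point is unique because $h^2E'(h)=2p\beta^2h^{2p+1}-\nu/m$ is strictly increasing in $h$. So $h_*$ is the global minimizer.

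Next we evaluate $E$ at $h_*$. The stationarity condition gives $\nu/(mh_*)=2p\beta^2h_*^{2p}$. Hence
\[
 E(h_*)=(1+2p)\beta^2h_*^{2p}=(2p+1)\beta^2\left(\frac{\nu}{2p\beta^2m}\right)^{2p/(2p+1)},
\]
which is \eqref{eq:optimal_mse_rate}.

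Finally we specialize $p$ using Theorem~\ref{thm:generator_depth}. Depth one (Corollary~\ref{prop:projection_log}) gives $p=2$, so the exponent is $4/5$. At depth two the Bernoulli coefficient $B_3$ vanishes, so by \eqref{eq:depth_parity_order} the bias is $O(h^4)$; when the fourth-order coefficient is nonzero, $p=4$ and the exponent is $8/9$.

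The only delicate point is the justification for dropping the remainders, which needs $h_*(m)\to0$. This holds by \eqref{eq:optimal_interval}. Then $o(h_*^{2p})$ and the $O(1/m)$ variance term are both $o(m^{-2p/(2p+1)})$, since $2p/(2p+1)<1$. So they do not affect the leading rate.
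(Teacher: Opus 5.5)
Your proposal is correct and follows the paper's proof: minimize $\beta^2h^{2p}+\nu/(mh)$ by differentiation, then use stationarity (the variance term equals $2p$ times the squared-bias term) to evaluate the minimum. Your checks of global minimality via the monotonicity of $h^2E'(h)$, and of why the remainders are negligible at $h_*(m)$, are correct additions that the paper leaves implicit.
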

\begin{proof}
The leading terms are $\beta^2h^{2p}+\nu/(mh)$. We differentiate to obtain $2p\beta^2h^{2p+1}=\nu/m$ and hence~\eqref{eq:optimal_interval}. At this value, the variance term is $2p$ times the squared-bias term; their sum is~\eqref{eq:optimal_mse_rate}.
\end{proof}

\paragraph{Optimized observation interval}
In Figure~\ref{fig:krylov_depth_population}(b), we minimize the exact first-order mean squared error over $2.00\mathrm{e}{-4}\leq h\leq0.5$ separately for each depth and sample size. Depth two first improves on depth one at approximately $1.00\mathrm{e}{6}$ pairs, because its smaller bias permits an interval near $0.3$ while the depth-one optimum decreases with the sample size. At $1.00\mathrm{e}{10}$ pairs, the optimized depth-two mean squared error is $26.0\%$ of the optimized depth-one value. Depths three and four have larger sampling coefficients and do not improve on depth two over the plotted range.

\begin{figure}[htbp]
\centering
\includegraphics[width=\textwidth]{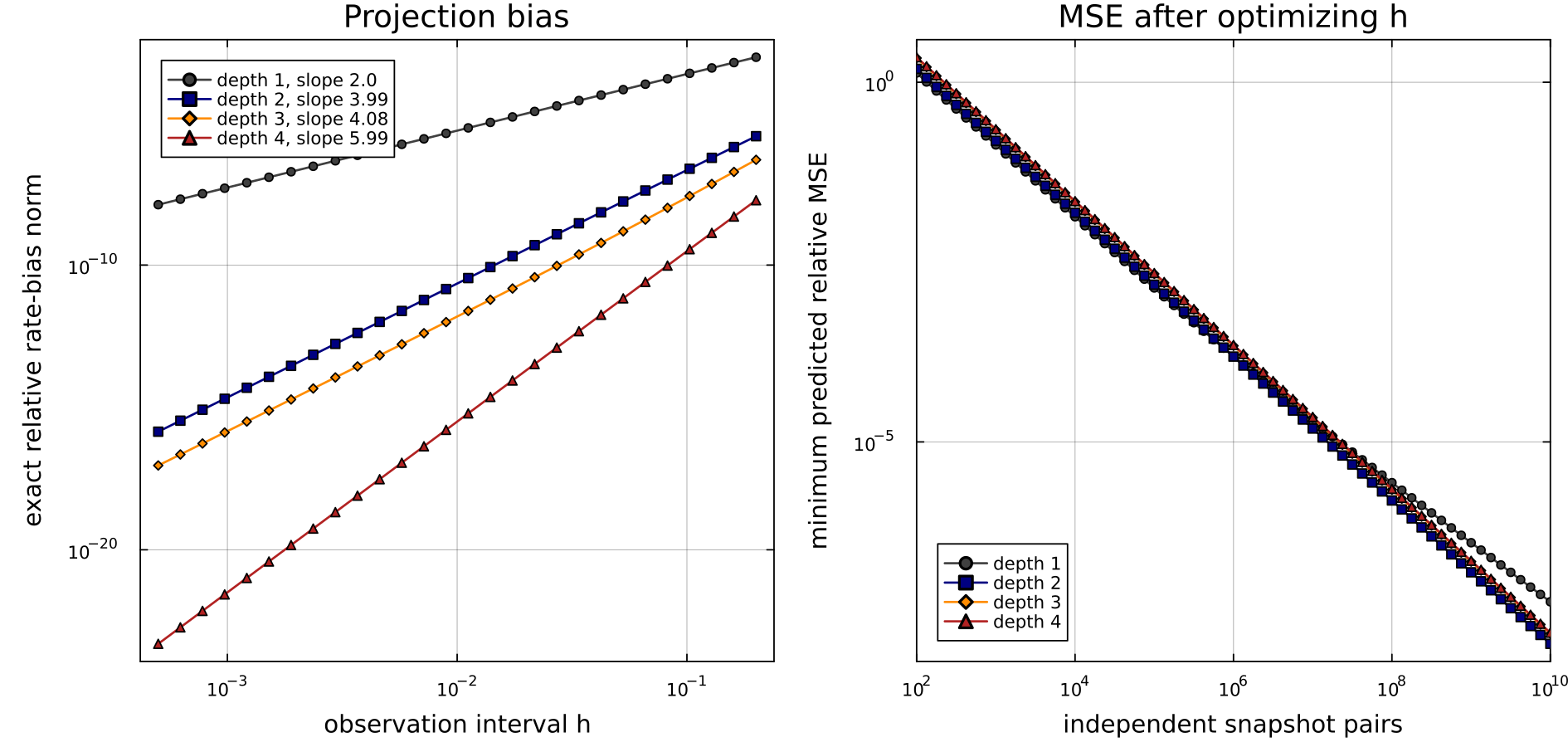}
\caption{Finite-depth generator closure in the exactly enumerable dimerization model. (a) Exact relative rate bias, computed with 256-bit arithmetic. The dictionary at depth $r$ contains $\calA Y,\ldots,\calA^rY$. (b) First-order relative mean squared error after minimizing over the observation interval for each depth and sample size; its bias contribution is evaluated from the exact projected operators.}
\label{fig:krylov_depth_population}
\end{figure}

\subsection{Brusselator over a wide range of observation intervals}
\label{sec:bru}

Exact enumeration separated projection bias from sampling error in the
dimerization study. We now use simulated Brusselator trajectories to examine
the practical choice of observation interval. The experiment begins in the
accurate short-interval regime, passes through projection-dominated error,
and ends where the fitted logarithm is no longer admissible.

The Brusselator~\cite{prigogine1968symmetry} combines a constant source,
autocatalysis, conversion, and degradation. It tests source-rate recovery and
the separation of two reactions with the same linear $X$ propensity. The
volume-scaled stochastic model with $\Omega=200$ is
\[
\varnothing\xrightarrow{200}X,\quad
2X+Y\xrightarrow{2.50\mathrm{e}{-5}}3X,\quad
X\xrightarrow{3.0}Y+D,\quad
X\xrightarrow{1.0}E,
\]
which starts from $\bx_0=(200,600)$ and runs to $T=50$ with $\Delta t=0.1$, while only $(X,Y)$ are retained as dynamical species. The ten-dimensional mass-action basis retains the constant and contains all factorial monomials through degree three. Let $\phi(x,y)=x(x-1)y/2$ denote the trimolecular propensity feature. The effective stochastic rate constants are $c_1=200$, $c_2=2.50\mathrm{e}{-5}$, $c_3=3.0$, and $c_4=1.0$; any chemostatted factor is absorbed into $c_3$. Applying Lemma~\ref{clm:species_action}, we obtain the species-row entries
\begin{align*}
(\bA)_{X,x}&=-(c_3+c_4)=-4.0,&
(\bA)_{X,\phi}&=c_2=2.50\mathrm{e}{-5},\\
(\bA)_{Y,x}&=c_3=3.0,&
(\bA)_{Y,\phi}&=-c_2.
\end{align*}
We add the species actions to cancel both autocatalysis and conversion:
\begin{equation}\label{eq:bru_sum}
 \calA(X+Y)=c_1-c_4X.
\end{equation}
Hence $c_4=-(\bA)_{X,x}-(\bA)_{Y,x}$ is recovered from the coefficient of
$X$. The controlled studies below use this relation together with the
individual $c_3$ and autocatalysis coefficients.

\label{sec:bru_controlled}
The controlled calculation uses 1000 independent direct-SSA trajectories (seed 714) with the stated propensities and initial state. We retain the constant and all ten factorial features through degree three. At $h=0.1$, the bootstrap standard deviations for $c_3,c_4$ are $0.0696,0.0188$, while the linearized predictions are $0.0697,0.0187$. The estimate $\widehat c_3=4.524$ differs substantially from the true value $3$. We vary $h$ and use an independent reference ensemble to separate sampling variation from this systematic error.

\label{sec:lag_regimes}
The extended calculation covers 14 intervals from $5.00\mathrm{e}{-4}$ to $6.4$, with every fit using the same factorial dictionary and rate-extraction formulas. We retain the intercept and all nonzero singular values and require a real principal logarithm. All intervals share current times $0:0.1:49.6$ and 100 paired whole-trajectory bootstrap index sets. Simulation extends to time $56$ to supply the longest intervals. The source and autocatalytic rates are now included, with autocatalysis extracted by averaging the two opposite species coefficients.

An independent ensemble of $10{,}000$ trajectories (seed 2718), divided into
twenty groups, provides the empirical reference. Variation among the groups
gives linearized Monte Carlo standard errors. The reference therefore retains
quantified Monte Carlo uncertainty, while every interval in the study uses
the same trajectories and sampling mixture. Figure~\ref{fig:lag_recovery}
reports all four rates together with the bootstrap and reference
uncertainties.

\begin{figure}[htbp]
\centering
\includegraphics[width=\textwidth]{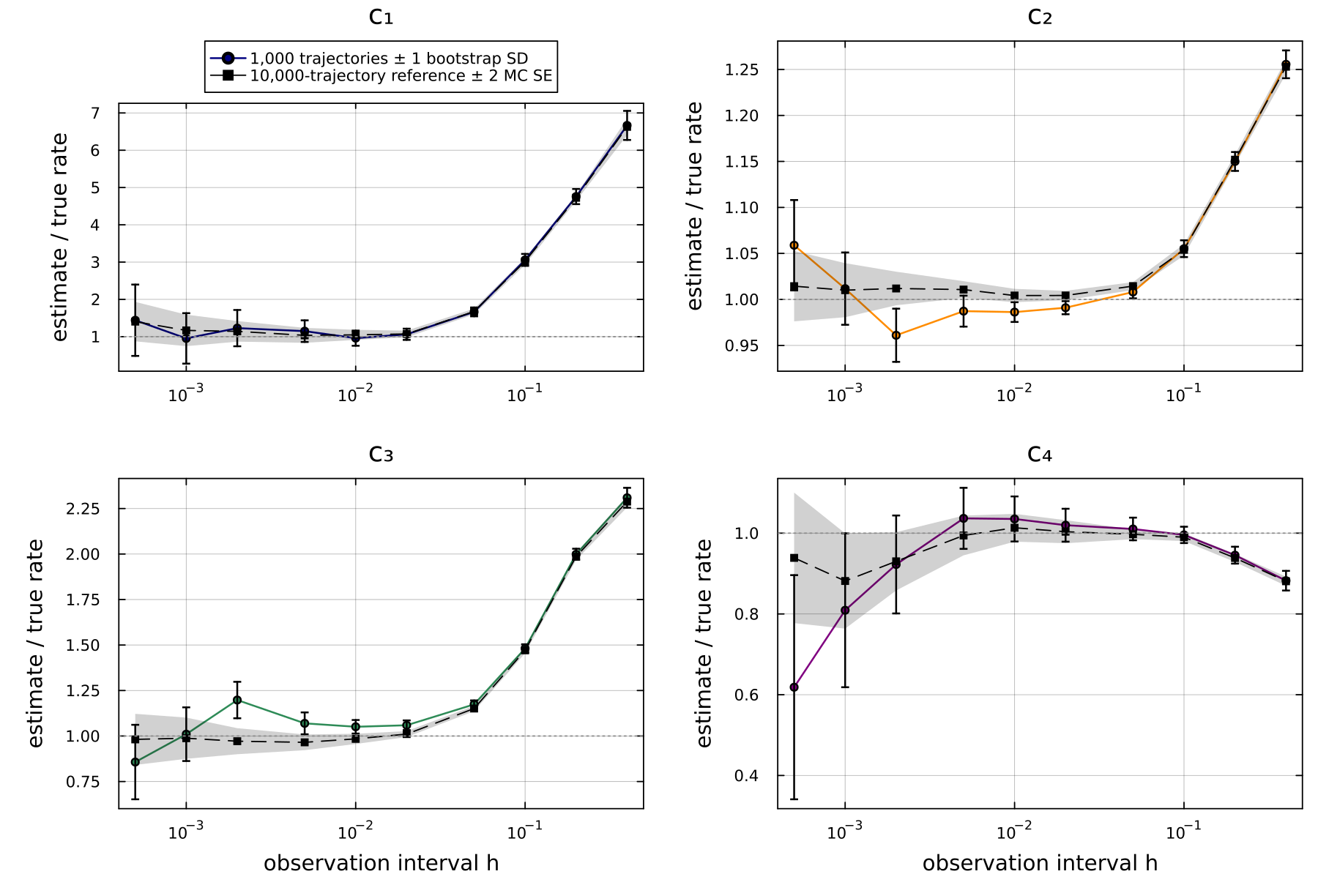}
\caption{Rate recovery across observation intervals, normalized by each true rate. We plot one bootstrap standard deviation for 1000 trajectories as error bars and two Monte Carlo standard errors for the independent $10{,}000$-trajectory reference as gray bands. The plotted range ends at $h=0.4$, the largest tested interval for which the central fitted matrix has a real principal logarithm.}
\label{fig:lag_recovery}
\end{figure}

In the figure, we distinguish three regimes. At short intervals, the bootstrap standard deviations for $c_3$ and $c_4$ fall from $0.614$ and $0.278$ at $h=5.00\mathrm{e}{-4}$ to $0.112$ and $0.0558$ at $h=0.01$, while the order-$h^2$ projection bias remains below the sampling resolution. At intermediate intervals, bias becomes rate dependent: increasing $h$ from $0.02$ to $0.1$ leaves $c_4$ near one but moves the conversion estimate from $3.177$ to $4.443$. From the independent reference, we obtain $4.413\pm0.024$ at $h=0.1$, confirming that this discrepancy is systematic. At $h=0.8$, $1.6$, and $3.2$, every bootstrap fit and the reference fit fail the real principal-logarithm check. Thus sampling error dominates at short intervals, projection bias dominates selected rates at intermediate intervals, and the fitted logarithm becomes inadmissible at long intervals.

Figure~\ref{fig:bru_spectrum} gives the corresponding spectral view. Define
the branch margin
\begin{equation}\label{eq:branch_margin}
 \delta_h=\min_{x\leq0}\sigma_{\min}(\bK_h-xI),
\end{equation}
This quantity measures distance to the negative real axis in a
singular-value sense and therefore accounts for nonnormality. For the central
fit, $\delta_h$ falls from $1.12\mathrm{e}{-2}$ at $h=0.1$ to
$2.39\mathrm{e}{-4}$ at $h=0.4$. It vanishes at $h=0.8$ when an eigenvalue
reaches the branch cut. The independent reference gives nearly identical
values, so the loss of a real logarithm is reproducible across the two
ensembles.

\begin{figure}[htbp]
\centering
\includegraphics[width=0.92\textwidth]{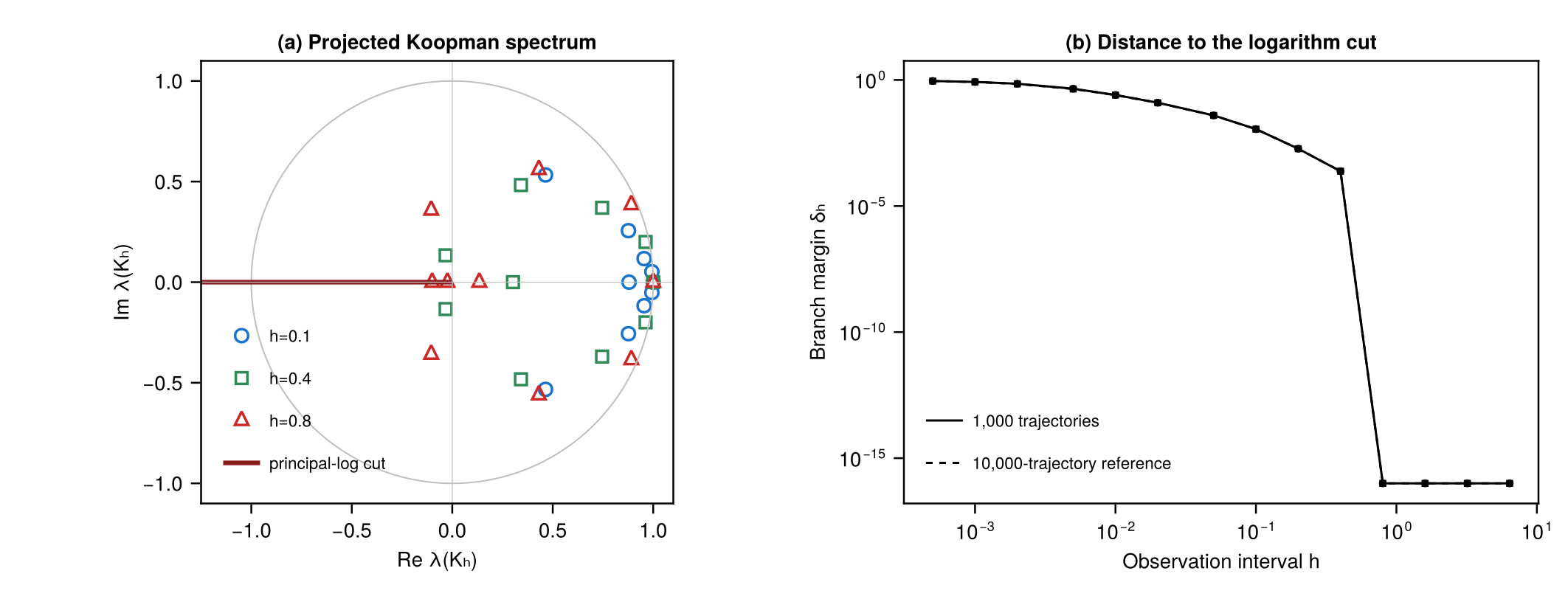}
\caption{Spectral diagnostics for the projected Brusselator Koopman matrices. (a) Eigenvalues of the central 1000-trajectory fit at three observation intervals; the red segment is the principal-logarithm branch cut and the gray circle is the unit circle. (b) The branch margin \eqref{eq:branch_margin} for the central and independent reference ensembles. Values plotted at machine scale are zero to working precision because an eigenvalue lies on the cut.}
\label{fig:bru_spectrum}
\end{figure}

With the current sample sizes, the asymptotic bias slope remains unresolved at the shortest intervals.

\subsection{Comparison across network classes}
\label{sec:extended_closure_benchmark}

The dimerization and Brusselator studies isolate bias order, sampling cost,
and interval selection. We next test a small number of added observables on
four reaction networks with different structures. The additional functions
do not depend on the unknown rates. For every channel pair $(\ell,k)$, we
evaluate
$\phi_{\bnu_\ell}(\bx)
[\phi_{\bnu_k}(\bx+\bzeta_\ell)-\phi_{\bnu_k}(\bx)]$. A function is discarded
if it vanishes on all current observations. We subtract the empirical
projection of each remaining function onto the base dictionary and compute an
SVD of the residuals. Independent residuals are added in decreasing order of
singular value, while the base coordinates remain unchanged. The resulting
dictionaries are nested, beginning with the ordinary estimator and ending
with full depth-two closure.

Separate exploratory ensembles choose the observation interval and the number
of added functions. Among the admissible settings considered, we select the
largest paired reduction in mean absolute relative rate error. This criterion
uses one common interval for the base and closure estimators. We then fix the
interval and number of functions before generating twenty new evaluation
datasets for each network. The base and closure fits use identical
observations within each dataset.

The largest model in this comparison is a reduced Huang--Ferrell MAPK cascade~\cite{huang1996ultrasensitivity}, with one phosphorylation at the upper tier and two at each of the lower tiers. After omitting explicit enzyme--substrate complexes, the model has ten mass-action channels: five phosphorylations catalyzed by the stimulus or the active species in the tier above, and five first-order dephosphorylations. The conserved tier totals are $40$, $60$, and $60$, and twelve initial conditions vary the activity of each tier separately while including several intermediate activation states. After the reduction, the model has six independent dynamical coordinates and a 12-function base dictionary. At a fixed stimulus level, its conservation class still contains $41\binom{62}{2}^2=146{,}611{,}121$ states.

Table~\ref{tab:extended_closure_benchmark} compares the two dictionaries on
the same observations. Within each dataset, we first average the absolute
relative errors over all rates. The error columns then report the median and
interquartile range across twenty datasets. The condition numbers are the
corresponding medians for the scaled empirical Gram matrices. Every projected
matrix has a real principal logarithm. The Lotka--Volterra and Brusselator
rows use overlapping intervals within each trajectory. Each
Michaelis--Menten and MAPK pair comes from an independent short trajectory.
We draw an initial condition and a current time from their fixed joint
distribution, simulate to that time, and then continue once over the
observation interval. Exploratory seeds 8101 and 9101--9501 select the
intervals and added functions. The reported seeds $10101$--$12001$, in
increments of $100$, are reserved for evaluation.

\begin{table}[htbp]
\centering
\scriptsize \caption{Independent comparison of observable sets using the same observations. Errors are percentages. The dimension column lists the base, selected, and complete depth-two dimensions. Added coordinates are orthogonalized against the scaled base dictionary. Their Gram eigenvalues lie within the extremal eigenvalues of the base Gram matrix, so the displayed condition numbers are unchanged. Intervals and selected dimensions were fixed on exploratory data before the twenty evaluation datasets were generated.}
\label{tab:extended_closure_benchmark}
{\setlength{\tabcolsep}{4.5pt}
\begin{tabular}{lrrrrrr}
\toprule
Network & $h$ & Pairs & $d_0/d_{\rm sel}/d_2$ & Base error & Closure error &
Condition numbers\\
\midrule
Lotka--Volterra & 0.2 & $100{,}000$ & $6/7/7$ &
$0.936\;(0.653,1.05)$ & $0.767\;(0.378,1.02)$ &
$2.84\mathrm{e}{3}\to2.84\mathrm{e}{3}$\\
Michaelis--Menten & 6.0 & $64{,}000$ & $4/5/6$ &
$5.56\;(5.22,6.13)$ & $2.79\;(1.50,4.19)$ &
$8.48\mathrm{e}{2}\to8.48\mathrm{e}{2}$\\
Brusselator & 0.04 & $248{,}500$ & $10/11/11$ &
$15.9\;(13.7,17.9)$ & $4.88\;(2.61,6.04)$ &
$3.58\mathrm{e}{6}\to3.58\mathrm{e}{6}$\\
MAPK cascade & 1.6 & $96{,}000$ & $12/15/21$ &
$4.90\;(3.92,5.60)$ & $2.90\;(2.59,3.67)$ &
$2.46\mathrm{e}{4}\to2.46\mathrm{e}{4}$\\
\bottomrule
\end{tabular}
}
\end{table}

\begin{figure}[htbp]
\centering
\includegraphics[width=\textwidth]{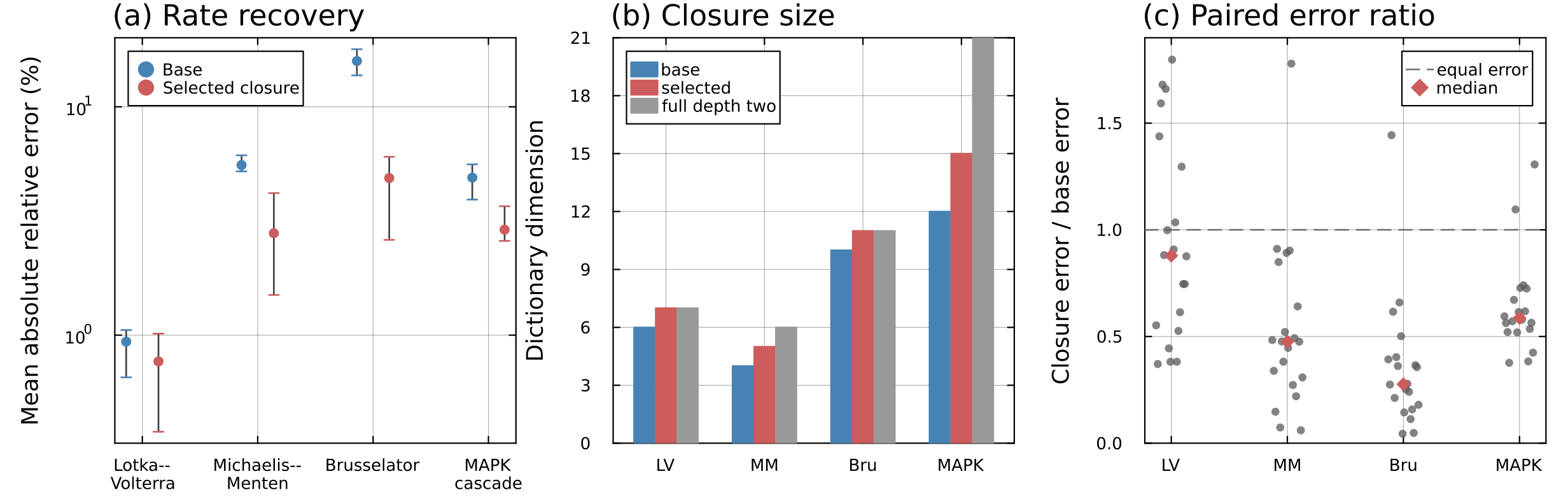}
\caption{Independent comparison after adding selected reaction-generated observables. (a) Median mean absolute relative rate error across twenty datasets; bars span the interquartile range. (b) Base, selected, and complete depth-two dictionary dimensions. (c) Paired ratio of the closure error to the base error for each dataset; diamonds mark medians and the dashed line marks equal error. Each base and closure fit uses the same observations. All four networks use evaluation seeds that were excluded from interval and dimension selection.}
\label{fig:network_closure_benchmark}
\end{figure}

Figure~\ref{fig:network_closure_benchmark} reports the paired comparison at the selected observation intervals and sample sizes. The median errors decrease from $0.936\%$ to $0.767\%$ for Lotka--Volterra, $5.56\%$ to $2.79\%$ for Michaelis--Menten, $15.9\%$ to $4.88\%$ for the Brusselator, and $4.90\%$ to $2.90\%$ for MAPK. With closure, we lower the paired error in $13$, $19$, $19$, and $17$ of the twenty datasets, respectively. Exact two-sided paired Wilcoxon tests yield $p=0.202$, $1.68\mathrm{e}{-4}$, $3.81\mathrm{e}{-6}$, and $1.34\mathrm{e}{-4}$ in the same order. Thus the Lotka--Volterra reduction is modest and unresolved at this sample size, while we observe clear paired reductions in the other three comparisons. We estimate one and three added functions for the Michaelis--Menten and MAPK calculations, respectively.

The finite Michaelis--Menten conservation class also permits a comparison
without sampling error. At $h=6$, the mean absolute relative error is
$5.23\%$ with the base dictionary and $0.484\%$ with full depth-two closure.
The infinite-sample calculation therefore agrees with the predicted bias
reduction. The new trajectory data show how much of this reduction remains
when one additional function must be estimated from a finite sample.

The MAPK conservation class contains $146{,}611{,}121$ molecular states. We fit a $15\times15$ Koopman matrix whose observables retain the species and propensity coefficients needed to identify all ten rates.

% =============================================================================
\section{Discussion and Conclusions}
\label{sec:discussion}
% =============================================================================

The method recovers reaction rates from paired molecular states without
constructing the full CME transition matrix. Mass-action structure places the
rates in selected rows of the Koopman generator. EDMD and a matrix logarithm
estimate these rows from finite-time data.

Generator paths that leave the retained observable space and return to the
species rows describe the finite-time projection error. Closure through depth
$r$ represents these paths through the order stated in
Theorem~\ref{thm:generator_depth}. The first remaining coefficient is
$-B_{r+1}/(r+1)!$, which gives order $h^{r+1}$ at odd depth and at least order
$h^{r+2}$ at even depth. The block-resolvent formula describes the same error
spectrally. Each omitted mode contributes a pole $\mu_\ell$ and a coupling
residue $\bm C_\ell$. In the dimerization calculation, the residues, rather
than the decay rates alone, rank the modes by their effect on the inferred
rates. Closure suppresses their shortest paths into the extracted species
rows, while the resolvents in \eqref{eq:log_resolvent_identity} quantify the
remaining amplification near the logarithm branch cut.

The finite dimerization calculation verifies these orders without state-space
truncation or Monte Carlo error. It also shows why the observation interval
and closure depth must be selected together. After optimizing the interval,
depth-two closure first improves on the ordinary logarithm at about
$1.00\mathrm{e}{6}$ pairs. At $1.00\mathrm{e}{10}$ pairs, its predicted mean
squared error is $26.0\%$ as large.

The Brusselator sweep reaches the complementary spectral boundary. The branch margin collapses at $h=0.8$, where both the central fit and the independent reference lose a real principal logarithm. At shorter intervals, we use the same calculation to separate sampling-dominated and projection-dominated rate error.

Independent evaluation datasets test the added observables when they must be
estimated from trajectories. Median errors decrease in all four networks,
with clear paired reductions for Michaelis--Menten, the Brusselator, and
MAPK. For Lotka--Volterra, the median gap is smaller than the variation among
datasets, and the paired test is unresolved. This result agrees with the
finite-sample error budget. Reducing projection bias gives only a modest gain
when sampling variation has comparable size.

The observable formulation exchanges statistical efficiency for a
calculation whose dimension equals the number of selected functions. In the
13-state model, the ordinary estimator has a standard deviation $6.05$ times
the Cram\'er--Rao limit attained asymptotically by the correctly specified
full likelihood. In the MAPK calculation, a $15\times15$ observable matrix
replaces a state-space matrix with about $1.47\mathrm{e}{8}$ rows and columns.
These examples show the intended trade-off. The method avoids enumeration of
the molecular state space while retaining a quantitative account of its bias
and sampling error.

\paragraph{Future work}
Extending the finite-state remainder bound to unbounded networks will require generator-domain conditions and moment estimates. Finite-data applications would also benefit from automatic selection or regularization of the generated candidate functions in place of the present exploratory selection. Such extensions should account explicitly for the covariance of overlapping intervals. Long observation intervals require a separate treatment of logarithm branches and admissibility, closely related to the classical embedding problem \cite{kingman1962embedding,israel2001generators,davies2010embeddable, casanellas2023embedding}. We leave measurement error, discovery of missing reactions, and matrix-free logarithm calculations to future work.

% =============================================================================

\section*{Code and data availability}
The code and data used to reproduce every table and figure are available at \url{https://github.com/AdityaDendukuri/cme_koopman}. 

\bibliographystyle{siamplain}
\bibliography{references,koopman_rate_frechet_references}

\end{document}